\documentclass[runningheads,orivec]{llncs}

\usepackage[T1]{fontenc}
\usepackage{graphicx}
\usepackage[bookmarks,bookmarksopen]{hyperref}
\usepackage{amsmath}
\usepackage{amssymb}
\usepackage{color}
\usepackage{paralist} 
\usepackage{stmaryrd} 

\usepackage{tikz}
\usetikzlibrary{arrows.meta, positioning}

\tikzset{
  ew/.style={
    to path={(\tikztostart.east) -- (\tikztotarget.west)}
  }
}

\usepackage{xspace}
\usepackage{local}
\usepackage{todonotes}

\begin{document}
\title{Polymorphism Meets DHOL}

\author{Rhea Ranalter\inst{1}\orcidID{0009-0006-2861-548X} \and
Cezary Kaliszyk\inst{2,1}\orcidID{0000-0002-8273-6059} \and
Florian Rabe\inst{3}\orcidID{0000-0003-3040-3655}
}

\authorrunning{R. Ranalter et al.}

\institute{University of Innsbruck, Computational Logic, Austria\\
\email{r2analter@gmail.com}
\and
University of Melbourne, School of Computing, Australia\\
\email{ckaliszyk@unimelb.edu.au}
\and
University of Erlangen-Nuremberg, Computer Science, Germany\\
\email{florian.rabe@fau.de}
}

\maketitle

\begin{abstract}
 DHOL is an extensional, classical logic that equips the well-known higher-order logic (HOL) with dependent types.
This allows for concise encodings of important domains like size-bounded data structures,
category theory, or proof theory.
Automation support is
obtained by translating DHOL to HOL, for which powerful modern automated theorem provers are available.
However, a critically missing feature of DHOL is polymorphism.
We develop the syntax and semantics of polymorphic DHOL and extend the translation
accordingly. We implement 
the translation
in the
logic-embedding tool and
evaluate it on a range of TPTP formalizations. The logic-embedding tool, together with an
off-the-shelf HOL theorem prover easily creates a PDHOL theorem prover for experimenting.

\end{abstract}

\section{Introduction}
\paragraph*{Setting}
Monomorphic dependently typed higher-order logic (DHOL) was introduced in~\cite{RRB:dhol:23,RRB23ext}.
It combines the simplicity of higher-order logic (HOL) \cite{churchtypes,hol}, particularly the
use of extensionality and classical Booleans, with the often-wished-for feature of dependent types.
Contrary to proof assistants based on dependent type theory \cite{coq,agda,lean}, it uses dependent types in the simplest possible setting and, in particular, does not introduce universes or inductive types.
Instead, it only changes HOL's simple function type $A\to B$ into a dependent one $\P x A B$ and allows for base types $a$ to depend on typed arguments.

This makes typing undecidable~\cite{H97} but has the benefit of being close
to languages in the HOL ecosystem for which strong automated theorem proving
(ATP) support exists. 
This is leveraged by giving a linear, compositional,
and judgment preserving/truth reflecting translation from monomorphic DHOL to monomorphic HOL to obtain implementations of
type-checking and theorem proving for DHOL \cite{RRB23ext}, based on partial equivalence relations (PERs) \cite{persemantics}.
Practical experiments show that this combination of expressivity and ATP support makes undecidable typing a price worth paying~\cite{dhol_satallax}.

Maybe surprisingly, even though the PER semantics of dependent types has been known for some time \cite{persemantics}, it is not widely known and the proofs are complex.
Indeed, as noticed during the preparation of \cite{southern_phd}, the literature contains examples of similar translations that were incorrectly assumed to be sound/complete.
Therefore, \cite{RRB23ext} focused on the simplest possible language and left out a number of common language extensions for HOL.

\paragraph*{Outline and Contribution}
We introduce PDHOL as an extension of DHOL with shallow (ML-style, rank 1) polymorphism and subtype definitions.
This greatly extends the expressivity of DHOL while retaining its simple definition and strong automation support.

To simplify the presentation, we split our exposition of PDHOL:
Section~\ref{sec:syn} and~\ref{sec:rul} present the syntax and proof system for basic polymorphism, and we relegate the presentation of two advanced features to Section~\ref{sec:dtv} and~\ref{sec:std}.
In Section~\ref{sec:trans}, we describe the translation from PDHOL to polymorphic HOL (PHOL) and show it is sound and complete.

PDHOL has been adopted as a TPTP standard in the form of the new DHF dialect~\cite{R+25}.
Using its TPTP representation, we have implemented our translation as a PDHOL$\to$PHOL logic-embedding.
This allows using any off-the-shelf PHOL theorem prover as a PDHOL theorem prover with negligible overhead.

We apply PDHOL in Section~\ref{sec:case} to give elegant formalizations of practically important problems, reaching a level that is typically only found in the rich languages of \emph{interactive} provers and not in simple logics like PDHOL that provide \emph{automated} proof support.
For example, Section~\ref{sec:case} shows how polymorphism and dependent types allow tracking key invariants in common polymorphic data structures, e.g., the dimensions and matrices or the black height of red-black trees.
In addition to plain type variables (e.g., $\Pg{\alpha:\type}\ldots$), PDHOL allows \emph{dependent} type variables, i.e., type variables that depend on terms (e.g. $\P\alpha{\nat\to\type}\ldots$).
We spell this out in Section~\ref{sec:dtv} and use it for formalizations that require abstracting over \emph{families} of types, such as heterogeneous lists.
Finally, in HOL, the core feature for building conservative extensions of theories is the subtype definition principle: a fresh type is introduced and axiomatized to be isomorphic to a subtype of an existing type.
In Section~\ref{sec:std}, we extend this feature to the dependently-typed case.
We then show how dependent types, polymorphism, and subtype definitions together allow formalizing complex concepts such as the type of homomorphisms between algebraic structures.

\section{Syntax}\label{sec:syn}
The grammar below shows both the DHOL language and our \hlp{extension} to PDHOL.
We also \hlm{mark} the parts that must be removed or adjusted to recover HOL as a fragment. 
\begin{center}
\begin{tabular}{l c l r}
  \(T\) & ::= & \(.\) | $T, a:\hlp{\Pt \valpha}\hlm{\P\vx\vA}\type$ | $T, c: \hlp{\Pt \valpha} A$ | $T, \ass\hlp{\Pt \valpha} \Phi$ & Theories \\
  \(\Gamma\) & ::= & \(\circ\) | \hlp{$\Gamma, \alpha:\type$} | $\Gamma, x:A$ | $\Gamma,\ass \Phi$ & Context\\
  \(\gamma\) & ::= & \(\bullet\) | \hlp{$\gamma, A$} | $\gamma, t$ | $\gamma,\checkmark$ & Substitutions\\
  \(A,B\) & ::= & $a\:\hlp{\vA}\: \hlm{\vt}$ |
                  $\hlp{\alpha}$ | $\hlm{\P xAB}$ | $\bool$ & Types\\ 
  \(t,u,\Phi\) & ::= & $c\:\hlp{\vA}$ | \(x\) | $\lam xAt$ | $t\:u$ | \(t\Rightarrow u\) | \(\eq Atu\) & Terms (incl. formulas $\Phi$)\\
\end{tabular}
\end{center}
We use the usual notations for the logical connectives and binders, which are all definable~\cite{andrews_truthproof}, and we write $A\to B$ as usual.
To avoid case distinctions, we will occasionally merge lists $\Pt \valpha\P\vx\vA$ of type and term variables into a single list $\Pg{\Delta}$ for a context $\Delta$.
Similarly, we may merge the list $\vA\ \vt$ of type and term arguments into a single substitution $\delta$.
We also use the following abbreviations for sequences of expressions:
\begin{center}
\begin{tabular}{|l|l|l|}
\hline
abbr. & expansion & remark \\
\hline
$\vA$ & $A_1\ \ldots\ A_n$ & types in a substitution, application or binding\\
$\vt$ & $t_1\ \ldots\ t_n$ & terms in a substitution or application\\
$\tp{\valpha}$ & $\tp{\alpha_1} \ldots \tp{\alpha_n}$ & type variables in a context or binding\\
$\vx:\vA$ & $x_1:A_1\ldots x_n:A_n$ & term variables in a context or binding\\
\hline
\end{tabular}
\end{center}

\begin{example}\label{ex:vectors}
We present fixed-length lists, sometimes called vectors, as a
running example. For that, we start with natural
numbers:
\begin{gather*}
  \nat :\type
  \qquad 0:\nat
  \qquad \suc:\nat\to\nat
  \qquad\+:\nat\to\nat\to\nat\\
  \ass\forall n:\nat.\+\ 0\ n =_{\nat} n
  \qquad \ass\forall n, m:\nat.\+\ (\suc\ n)\ m=_{\nat} \suc\ (\+\ n\ m)
\end{gather*}
Here, \(\nat\) is a non-dependent, or simple, base type for which a constant, a constructor, and
a function are declared. We will abbreviate \(\suc\ 0, \suc\ (\suc\ 0), ...\) with \(1,
2, ...\).

Everything stated so far is expressible in regular HOL --- neither dependent types nor
polymorphism play a role. We now extend this theory to vectors, keeping the highlighting
conventions used in the grammar.
\begin{gather*}
  \vect:\hlp{\Pt{\alpha}}\hlm{\P{n}{\nat}}\type
  \quad \nil:\hlp{\Pt{\alpha}}\vect\;\ha\;\hlm{0}\\
  \cons:\hlp{\Pt{\alpha}}\hlm{\P{n}{\nat}}\ha\to\vect\;\ha\;\hn\to\vect\;\ha\;(\suc\;\hn)\\
  \app:\hlp{\Pt{\alpha}}\hlm{\P{n,m}{\nat}}\vect\;\ha\;\hn\to\vect\;\ha\;\hm\to\vect\;\ha\;(\+\
  \hn\;\hm)
\end{gather*}
Removing the highlighted dependent part yields polymorphic, dynamic lists in PHOL while
instantiating the highlighted polymorphic part results in fixed-type vectors in DHOL. Doing
both of these gives fixed-type, dynamic lists in HOL. 

Note that, if one now wants to prove the associativity of \(++\), type checking the
statement would require a proof of the associativity of \(+\) --- turning type checking into an
undecidable problem in general.
\end{example}

\paragraph*{Contexts and Substitutions}
Contexts \(\Gamma\) are lists of local declarations, subject to $\alpha$-renaming and
substitution as usual: (i) type variables $\alpha:\type$ (ii) typed variables \(x:A\) (iii)
local assumptions $\ass \Phi$. 
\(\circ\) denotes the empty context.

Substitutions $\gamma:\Gamma\to\Gamma'$ provide type/term expressions for all type/term variables declared in $\Gamma$ in such a way that the assumptions made in $\Gamma$ are satisfied.
To track when $\Gamma$ contained an assumption that must be proved, we write $\checkmark$ in the corresponding place of a substitution.
If we extended the formulation with a language of proofs, the $\checkmark$ would be replaced
with an appropriate proof expression.
\(\bullet\) denotes the empty substitution.
We write $E[\gamma]$ for the result of substituting in expression $E$ according to $\gamma$, and
we abbreviate as $E[t]$ the common case where the substitution is the identity for all
variables but the last one.

\begin{example}\label{ex:term}
  Assume a typing expression \(E\) for a 2-element vector \texttt{[n, m]}, represented formally as
    \(\cons\ \nat\ 1\ n\ (\cons\ \nat\ 0\ m\ (\nil\ \nat)):\vect\ \nat\ 2\). 
  For this to be properly typed, the
  context must include typing statements for \(n\) and \(m\). For this example, we
  also add two assumptions resulting in the context: \(n:\nat, m:\nat, \ass n=_{\nat} 2,
  \ass m=_{\nat} 3\).
  
  A well-formed substitution \(\gamma\) for this context is \(2, suc\ 2,
  \checkmark, \checkmark\). \(E[\gamma]\) would then be \(\cons\ \nat\ 1\
  2\ (\cons\ \nat\ 0\ (\suc\ 2)\ (\nil\ \nat)):\vect\ \nat\ 2\). The \(\checkmark\)s in the
  substitution represent the proof obligations \(2=_{\nat} 2\) 
  and \(suc\ 2 =_{\nat} 3\) 
  which are trivial after unfolding the abbreviations used for numbers.
\end{example}

\paragraph*{Theories}
Theories $T$ are lists of global declarations: (i) base types $a$ depending on typed arguments
$x:A$ (ii) typed constants $c$ (iii) global assumptions (i.e. axioms) $\ass \Phi$.
$.$ denotes the empty theory.

Contrary to contexts, declarations in theories may depend on arguments, and this is the
mechanism how both monomorphic DHOL~\cite{RRB23ext} and our extension thereof are defined as generalizations of HOL.
The following table gives an overview of the possible combinations:
\begin{center}
\begin{tabular}{|l|l|l|l|}
\hline
 & \multicolumn{3}{c|}{declaration of} \\
depending on & type symbol & term symbol & global assumption \\
\hline
type variable & \multicolumn{3}{c|}{\hlp{allowed in PDHOL and PHOL}} \\
term variable & \hlm{allowed in DHOL} & definable via $\Pi$ & definable via $\forall$ \\
local assumption&  \multicolumn{2}{c|}{not allowed} & definable via $\impl$\\
\hline
\end{tabular}
\end{center}
DHOL arises from HOL by allowing type symbols to depend on term arguments.
PDHOL arises by allowing all declarations to depend on type arguments.
Three combinations are always definable and therefore redundant.
We exclude the remaining two cases: local assumptions as parameters of type/term symbols.
These would effectively allow declaring partial functions and partial type symbols.
We exclude those because, to our knowledge, they cannot be translated to HOL in a sound and complete way.

\paragraph*{Types and Terms}
Types $A,B,\ldots $ and terms $t,u,\ldots$ are formed from
\begin{itemize}
\item references to symbols declared in the theory, which
  must always be fully instantiated with type and term arguments where applicable
\item references to variables declared in the context
\item the usual production rules of the grammar for dependent function types $\P x AB$, function formation $\lam xAt$ and application $t\ u$,
\item production rules of the grammar for Booleans $\bool$ formed from typed equality $\eq A t u$ and dependent implication $\Phi\impl\Psi$.
\end{itemize}
Note that equality of \emph{terms} is a Boolean term.
Equality of \emph{types} is not---it will be a judgment in the type system.



\emph{Dependent} implication means that in $F\impl G$, the well-formedness derivation for $G$ may already assume $F$ is true.
This is critical for type-checking, e.g., in \(a =_A b \impl f\: a =_{B[a]} f\: b\) for a dependent function $f$.
We believe dependent implication is not definable from equality and therefore make it an additional primitive from which corresponding dependent variants of conjunction and disjunction can be defined.
While the loss of commutativity of conjunction/disjunction may seem odd at first, the behavior is well-known from short circuit
evaluation in some programming languages.

\section{Inference System}\label{sec:rul}
\begin{figure}[hbtp]
\begin{center}
	\begin{tabular}{|c|c|c|}
		\hline 
		Name & Judgment & Intuition \\ 
		\hline
		theories & $\Thy T$ & $T$ is a well-formed theory\\
		contexts & $\Ctx \Gamma$ & $\Gamma$ is a well-formed context\\
		substitutions & $\SubG \delta \Delta$ & $\delta$ is a well-formed substitution for $\Delta$\\
		types & $\TypeG A$ & $A$ is a well-formed type \\ 
		typing & $\TermG tA$ & $t$ is a well-formed term of well-formed type $A$ \\
		validity & $\TrueG \Phi$ & Boolean $\Phi$ is derivable \\ 
		equality of types & $\TEqG AB$ & well-formed types $A$ and $B$ are equal \\ 
		\hline 
	\end{tabular}
\caption{DHOL Judgments}\label{fig:judge}
\end{center}
\end{figure}

In Fig.~\ref{fig:judge}, we give the judgments of PDHOL.
These look the same as for DHOL---but as contexts $\Gamma$ can now contain
type variables, they correspond to polymorphic statements.
We use \(\leftarrow\) in the judgment for well-formed substitutions to avoid confusion with the
simple function type constructor \(\rightarrow\).
Fig.~\ref{fig:declrules} and~\ref{fig:exprules}
present the rules of PDHOL.
They arise as an 
extension of the rules of DHOL.
We use \(\in, \notin\) on lists in the obvious manner.
\newcommand{\rtext}[1]{\multicolumn{2}{l}{\text{#1}}}
\begin{figure}[htbp]
  \[\begin{array}{l}
    \text{Well-formed theories $T$}\\
    \rul[Rthyempty]{ThyBase}{}{\Thy .} \quad
    \rul[Rthytype]{ThyType}{\Thy T \quad a\notin T\quad \Ctx{\Delta}            \quad \Delta=\tp\valpha,\vx:\vA}{\Thy{T,a:\Pg\Delta\type}}\\[.3cm]
    \rul[Rthyterm]{ThyCon} {\Thy T \quad c\notin T\quad \Type{\Delta}{A}        \quad
    \Delta=\tp\valpha}        {\Thy{T,c:\Pt\Delta A}} \\ 
    \rul[Rthyass]{ThyAss}   {\Thy T \quad                \Term{\Delta}{\Phi}{\bool} \quad \Delta=\tp\valpha}        {\Thy{T,\ass\Pt\Delta \Phi}}\\[.5cm]
    \text{Contexts $\Delta$ and substitutions $\delta$ for them}\\
    \rul[Rctxempty]{CtxBase}{\Thy T}{\Ctx \circ} \quad
    \rul[Rctxtype]{CtxTp}{\Ctx \Gamma \quad \alpha\notin\Gamma}
    {\Ctx{\Gamma,\tp\alpha}} \quad
    \rul[Rsubempty]{SubBase}{\Ctx \Gamma}{\SubG \circ\bullet} \\[.3cm] 
    \rul[Rsubtype]{SubTp}{\SubG \delta\Delta \quad \alpha\notin\Delta \quad \TypeG{A}}
    {\SubG{\delta,A}         {\Delta,\tp\alpha}} \quad 
    \rul[Rctxterm]{CtxVar} {\Ctx \Gamma \quad x\notin\Gamma \quad \TypeG{A}}
    {\Ctx{\Gamma,x:A}} \\[.3cm] 
    \rul[Rsubterm]{SubVar} {\SubG \delta\Delta \quad x\notin\Delta \quad \TypeG{A}       \quad \TermG t{A[\delta]}}  {\SubG{\delta,t}         {\Delta,x:A}}\\[.3cm]
    
    \rul[Rctxass] {CtxAss} {\Ctx \Gamma \quad \TermG{\Phi}{\bool}}{\Ctx{\Gamma,\ass \Phi}} \quad
    \rul[Rsubass] {SubAss} {\SubG \delta\Delta \quad \TermG{\Phi}{\bool}\quad \TrueG {\Phi[\delta]}}   {\SubG{\delta,\checkmark}{\Delta,\ass \Phi}}\\[.5cm]
    \text{Lookup of type/term/assumption in a theory (left) or context (right)}\\
    \rul[Rtypesym]{TpSym}{\Ctx \Gamma \quad a:\Pg\Delta \type \in T\quad \SubG{\delta}{\Delta}}{\TypeG{a\ \delta }} \qquad
    \rul[Rtypevar]{TpVar}{\Ctx \Gamma \quad \alpha :\type\in\Gamma}{\TypeG \alpha}\\[.3cm]
    \rul[Rtermsym]{TermSym}{\Ctx \Gamma \quad c:\Pg\Delta A  \in T\quad \SubG{\delta}{\Delta}}{\TermG{c\ \delta}{A[\delta]}} \qquad
    \rul[Rtermvar]{TermVar}{\Ctx \Gamma \quad x:A\in\Gamma}{\TermG x A}\\[.3cm]
    \rul[Rvalsym]{ValidSym}{\Ctx \Gamma \quad \ass\Pg\Delta \Phi \in T\quad \SubG{\delta}{\Delta}}{\TrueG{\Phi[\delta]}} \qquad
    \rul[Rvalvar]{ValidVar}{\Ctx \Gamma \quad \ass \Phi\in\Gamma}{\TrueG \Phi}\\
  \end{array}\]
\caption{DHOL Rules for theories and contexts.}\label{fig:declrules}
\end{figure}

The rules in Fig.~\ref{fig:declrules} cover the structural parts, i.e., the declaration of and references to declarations in the theory and context.
Note how the rules for theory and context formation are parallel.
For example, \Rthytype and \Rctxtype handle the declaration of types $a:\Pt{\valpha}\P\vx\vA\type$ in a theory resp. $\tp\alpha$ in a context.
The main difference is that the former allows type symbols $a$, term symbols $c$, and global assumptions to depend on a list $\Delta$ of parameters.
To emphasize the common structure of the handling of parameters, we unify all parameter lists notationally into a single context $\Delta$.

Each of these six rules for making the declaration is paralleled by one of six rules for referencing (looking up).
For example, \Rtypesym and \Rtypevar reference type symbols $a$ resp. type variables $\alpha$.
Because the former is parametric in some context $\Delta$, their references must be instantiated with an appropriate substitution $\delta$ for the parameters.

Finally, the four rules for forming contexts are parallel to the four rules for forming substitutions.
For example, \Rsubtype shows how to extend a substitution $\delta$ for $\Delta$ to a substitution for $\Delta,\tp\alpha$ by adding a type substitute $A$ for $\alpha$.

\begin{figure}[htbp]
\[\begin{array}{l}
\text{Booleans: type formation, terms for equality and implication}\\
   \rul[Rtypebool]{TpBool}{\Ctx\Gamma}{\TypeG \bool} \quad
   \rul[Rtermeq]{TermEq}{\TermG tA \quad \TermG uA}{\TermG{\eq Atu}\bool}\\
   \rul[Rtermimpl]{TermImpl}{\TermG \Phi\bool \quad \Term{\Gamma,\ass \Phi}\Psi\bool}{\TermG{\Phi\impl\Psi}\bool}\\[0.5cm]
\text{Functions: type formation, $\lambda$-abstraction, application}\\
   \rul[Rtypepi]{TpPi}{\TypeG A \quad \Type{\Gamma,x:A}B }{\TypeG {\P xAB}}\\[.3cm]
   \rul[Rtermlam]{TermLambda}{\TypeG A \quad \Term{\Gamma,x:A}t B}{\TermG{\lam xAt}{\P xAB}}\quad
   \rul[Rtermapp]{TermApply}{\TermG{t}{\P xAB} \quad \TermG uA}{\TermG{t\ u}{B[u]}}\\[0.5cm]
\text{Type equality: congruence for all constructors and conversion of types}\\
   \rul[Rteqvar]{TpEqVar}{\Ctx \Gamma \quad \tp\alpha\in \Gamma}{\TEqG \alpha\alpha}\quad
   \rul[Rteqsym]{TpEqSym}{\Ctx \Gamma \quad a:\Pg\Delta \type \in T \quad \SubEqG{\Delta}{\delta}{\delta'}}{\TEqG{a\ \delta}{a\ \delta'}}\\[.3cm]
   \rul[Rteqbool]{TpEqBool}{\Ctx \Gamma}{\TEqG \bool\bool}\quad
   \rul[Rteqpi]{TpEqPi}{\TEqG A {A'}\quad \TEq{\Gamma,x:A}{B}{B'}}{\TEqG{\P xAB}{\P x{A'}{B'}}}\\[.3cm]
  \rul[Rtermteq]{TermConv}{\TermG tA \quad \TEqG A{A'}}{\TermG t{A'}}\quad
  \rul[Rboolext]{BoolExt}{\TrueG{t\ \bot} \quad \TrueG{t\ \top} \quad \top,\bot\text{ defined as usual}}{\True{\Gamma, x:o}{t\ x}} \\[.3cm]
     \text{where $\SubEqG \Delta\delta{\delta'}$ abbreviates the expression-wise provable equality}\\
  \text{of two substitutions for $\Delta$}\\[.5cm]  
\text{We omit the following routine validity rules: congruence rules for application;}\\
  \text{$\beta$, $\eta$ for functions; introduction and elimination for $\impl$}\\
  \text{Note that propositional and functional extensionality is implied}\\
  \text{by the rules for equality and the omitted eta rule~\cite{RRB23ext}.}
\end{array}
\]
\caption{DHOL Rules for Types and Terms}\label{fig:exprules}
\end{figure}

The rules in Fig.~\ref{fig:exprules} cover the rules for expressions.
We only point out some specialties: The rule \Rtermimpl makes implication $\Phi\impl\Psi$ dependent by assuming $\Phi$ while in the well-formedness of $\Psi$.
The rule \Rteqsym, while looking like a routine congruence rule, is the rule that makes type-checking undecidable by making two types equal if their arguments are equal component-wise.
Here the equality $\SubEqG \Delta\delta{\delta'}$ of substitutions holds if the term/type equality judgments hold for all corresponding pairs of terms/types in $\delta$ and $\delta'$.
And because term equality may depend on arbitrary assumptions in theory and contexts, so do all judgments.
Finally, \Rtermteq is only needed for constants and variables; for any other term, it can be derived using congruence.

\section{Translating PDHOL to PHOL}\label{sec:trans}
\paragraph*{Overview}

Like for DHOL the translation applies dependency erasure,
written with an overline $\tr{X}$, to turn PDHOL syntax into PHOL syntax.
Note that because the latter is a fragment of the former, we can reuse all PDHOL notations to write PHOL
syntax.
Intuitively, term-dependent types are translated into types without their term arguments.
The lost information is captured in a partial equivalence relation (PER) in the style of \cite{persemantics}.
Readers may consult the examples below or the invariants stated in Thm.~\ref{thm:preserve} to help their intuitions.

All term arguments of type symbols are erased; type arguments are kept, i.e., polymorphic symbols remain polymorphic.
In particular, we translate the type $a\ \vA\ \vt$ to $a\ \tr{\vA}$ and the type $\P xAB$ to $\tr A\to \tr B$.
To recover the erased typing information, we define for every PDHOL-type $A$ a
PER $\per A$ on $\tr{A}$ in PHOL
such that the PHOL-formula $\perapp A {\tr t}{\tr t}$ captures the PDHOL-typing judgment $t:A$.
Critically, term equality \(s =_A t\) is translated to $\perapp A {\tr s}{\tr t}$.

PERs are symmetric and transitive relations, and an element in a PER is related to itself iff it is related to any element.
So $\per A$ is an equivalence relation on a subtype of $\tr A$.
The corresponding quotient of that subtype of $\tr A$ is the semantics of $A$ under our translation.
We write $\isper r$ to abbreviate that $r$ is a PER (i.e., symmetric and transitive).

Expanding the usual definitions of the quantifiers reveals that (up to provable equivalence)
$\perapp Axx$ acts as a guard when quantifying over $x$. 
One might think that a unary predicate (indicating a subtype) would suffice as a type guard instead of a binary predicate.
But using quotients of subtypes (and thus PERs) becomes necessary at higher function types where two functions are equal if they map type-guarded inputs to equal outputs.

\paragraph*{Auxiliary Notations}
While conceptually straightforward, binding the parameters in theories in all generated
declarations is notationally complex. Therefore, we abbreviate as follows:

\begin{definition}[Abbreviations for PHOL-Contexts]\label{def:holabbrs}
Given a PHOL-context $\Gamma$ and a PHOL-substitution \(\gamma\), we abbreviate
\begin{itemize}
\item $\ot\Gamma$ is like $\Gamma$ but contains only the t\textbf{y}pe variables.
\item $\ott\Gamma$ is like $\Gamma$ but contains only the type and t\textbf{e}rm variables. (In HOL, as opposed to DHOL, such taking of subcontexts is always legal as long as the order is preserved.)
\item $\ot\gamma$ is like $\gamma$ but contains only the t\textbf{y}pe substitutes.
\item $\ott\gamma$is like $\gamma$ but contains only the type and t\textbf{e}rm substitutes.
\item If $\Gamma$ consists of only type variables, we write $\Gamma\to\type$ for the kind $\type\to\ldots\to\type\to\type$ (taking one type argument for every type variable in $\Gamma$).
\item If $\Gamma$ consists of type variables $\valpha$ and term variables $\vx:\vA$, we write $\Gamma\to B$ for $\Pt\valpha A_1\to\ldots\to A_n\to B$.
\item If $\Gamma$ consists of type variables $\valpha$, term variables $\vx:\vA$, and assumptions
  $\ass \Phi_i$, we write $\Gamma\impl \Psi$ for the PHOL-formula $\Pg\valpha \forall
  x_1:A_1.\ldots\forall x_m:A_m.\Phi_1\impl \ldots \Phi_n\impl \Psi$; if $\Gamma$ contains alternating term variables and assumptions, we alternate the $\forall$ and $\impl$ bindings accordingly.
\end{itemize}
\end{definition}

The abbreviations $\ot\Gamma$ and $\ott\Gamma$ remove parameters that a PHOL-declaration cannot bind: term symbol declarations cannot bind assumptions, and type symbol declarations cannot bind assumptions or term variables.
These occur because every type $A$ yields a translated type $\tr A$, a binary relation $\per A$, and a statement $\isper{\per A}$.
Consequently, a type variable $\alpha$ is translated to three declarations: a type variable $\alpha$, a binary relation $\alpha^*$ on it, and a local assumption $\isper{\alpha^*}$.
Similarly, every term $t:A$ yields a translated term $\tr t$ and a statement $\perapp A {\tr t}{\tr t}$.
Consequently, every term variable $x$ is translated to two declarations: a term variable and a local assumption.
Thus, $\tr\Gamma$ contains a mixture of type variables, term variables, and assumptions even if $\Gamma$ contains only type variables.
The latter must be removed if we want to bind $\tr\Gamma$ in a PHOL-type or term symbol declaration.

The abbreviations $\Gamma\to\type$, $\Gamma\to A$, and $\Gamma\impl \Phi$ efficiently perform these bindings.
For example, if a PDHOL axiom $\ass \Pt\valpha \Phi$ is parametric in type variables $\alpha_1,\ldots,\alpha_n$, then $\tr\Gamma$ contains $3n$ declarations.
$\tr\Gamma \impl \tr \Phi$ binds all of them to construct a PHOL-axiom: the type variables in $\tr\Gamma$ remain type variables, the term variables are $\forall$-bound, and the unnamed assumptions are bound by $\impl$.
Any PDHOL usage of this axiom instantiates each type variable $\alpha$ with a type $A$.
In PHOL, this corresponds to instantiating $\alpha$ with $\tr A$, universal elimination with $\per A$, and implication elimination with $\isper{\per A}$.

\paragraph*{Formal Definition}
We translate types and terms as follows:

\begin{center}
\begin{tabular}{|l|l||l|l|}
\hline
PDHOL type $A$ & Translation $\tr A$ in PHOL & PDHOL term $t$ & Translation $\tr t$ in PHOL\\
\hline
$a\ \delta$ & $a\ \ot{\tr{\delta}}$ & \(c \ \delta\) & \(c\ \ott{\tr{\delta}}\) \\
$\alpha$ & $\alpha$ & $x$ & $x$ \\
$\bool$ & $\bool$ & $\eq A t u$ & $\perapp Atu$ \\
                 && $\Phi\impl\Psi$ & $\tr \Phi \impl \tr \Psi$\\
$\P xAB$ & $\tr A \to \tr B$ & $\lam xAt$ & $\lam x{\tr A}{\tr t}$ \\
                            && $t\ u$ & $\tr t \ \tr u$ \\
\hline
\end{tabular}
\end{center}
For the translation of type and term symbol \emph{references}, compare the matching translation of the corresponding \emph{declarations} in theories below.
Contexts (left) and substitutions (right) are translated by concatenating the translations of their components:
\begin{center}
\begin{tabular}{|l|l||l|l|}
\hline
PDHOL & Translation & PDHOL & Translation\\
\hline
$\tp\alpha$ & $\tp\alpha$, $\alpha^*:\alpha\to\alpha\to\bool$, $\ass \isper{\alpha^*}$  & $A$   & $\tr A$, $\per A$, $\checkmark$\\
$x:A$       & $x:\tr A$, $\ass A^*\ x\ x$ & $t$   & $\tr t$, $\checkmark$\\
$\ass \Phi$    & $\ass \tr \Phi$  & $\checkmark$ & $\checkmark$ \\
\hline
\end{tabular}
\end{center}
Note how substitutions for $\Delta$ are translated to substitutions for $\tr \Delta$: for example, just like every type variable produces $3$ declarations, every type substitute produces $3$ corresponding substitutes.
In particular, each $\checkmark$ in the translation of a substitution represents the invariant that the respective formula is in fact provable in the translated context:
for example, for every PDHOL-type $A$, PHOL can prove $\isper{\per A}$, and for every PDHOL-term $t:A$, PHOL can prove $\perapp Att$.
These properties are shown in Thm.~\ref{thm:preserve}.

The definition of the PER follows the principles of logical relations:
\begin{center}
\begin{tabular}{|l|l|}
\hline
PDHOL type $A$ & $\perapp Atu$ in PHOL\\
\hline
$a\ \delta$ & $a^*\ \ott{\tr{\delta}}\ t\ u$ \\
$\alpha$ & $\alpha^*\ t\ \ u$ \\
$\bool$ &  $\eq\bool tu$ \\
$\P xAB$ & $\forall x,y:\tr A. \perapp Axy\impl \perapp B{(t\ x)}{(u\ y)}$ \\
\hline
\end{tabular}
\end{center}
Here $a^*$ and $\alpha^*$ are names introduced during the translation of theories and contexts.
These declare the respective PER axiomatically for type symbols and type variables.

\begin{example}
  \label{ex:trans1}
  We use the expression \(E\) from Ex.~\ref{ex:term} to create \(E_2\), a list \texttt{[E, E]} of lists by \(E_2 := \cons\ (\vect\ \nat\ 2)\ 1\
  E\ (\cons\ (\vect\ \nat\ 2)\ 0\ E\ (\nil\ (\vect\ \nat\ 2))):\vect\ (\vect\ \nat\ 2)\ 2\).
  Its translation \(\tr{E_2}\) yields \(\cons\ (\vect\
  \nat)\ 1\ \tr{E}\ (\cons\ (\vect\ \nat)\ 0\newline \tr{E}\ (\nil\ (\vect\ \nat)))\).
  
  Here \(\delta\) is \((\vect\ \nat\ 2)\ 1\ E\ (...)\). Observe that
  the erasure recurses through the argument list, i.e., \(\tr{(\vect\ \nat\ 2)}\ \tr{1}\ \tr{E}\ 
  \tr{(...)}\). The type argument \(\vect\ \nat\ 2\) is translated into \(\vect\ \nat\)
  removing the term argument to the type as well as the generated PER \((\vect\ \nat\ 2)^*\) in
  accordance with our definition of \(\ot\delta\). The remaining arguments are terms that do not
  take term arguments, meaning that the translation does not change them.

  The type of the expression \(\vect\ (\vect\ \nat\ 2)\ 2\) is correspondingly erased to
  \(\vect\ (\vect\ \nat)\).
\end{example}

Finally, the cases for declarations in theories are more complex because they contain contexts.
This is where the abbreviations from Def.~\ref{def:holabbrs} are used:

\begin{center}
\begin{tabular}{|ll|l|}
\hline
PDHOL && Translation in PHOL\\
\hline
$a:\Pg\Delta\type$ &where $\Delta=\tp\valpha,\ \vx:\vA$ & $a:\ot{\tr\Delta}\to\type$ \\
     && $a^*:\ott{\tr\Delta}\to a\ \valpha \to a\ \valpha\to\bool$ \\
     && $\ass \tr\Delta \impl\isper{a^*\ \valpha\ \valphaper\ \vx}$ \\
\hline
$c:\Pg\Delta A$ &where $\Delta=\tp\valpha$            & $c:\ott{\tr{\Delta}}\to\tr A$ \\&& $\ass \tr{\Delta}\impl \perapp A {(c\ \valpha)} {(c\ \valpha)}$\\
\hline
$\ass \Pg\Delta  \Phi$ &where $\Delta=\tp\valpha$        & $\ass \tr{\Delta} \impl \tr \Phi$ \\
\hline
\end{tabular}
\end{center}

\begin{example}
  \label{ex:trans2}
  Translating our running example's theory yields \(\vect:\type\to\type\) for the vector
  declaration. Note that while \(\Delta\) includes type and term variables, the first part of
  the erasure only considers \(\ot\Delta\), doing away with the term variables, therefore
  \(\ot{\tr\Delta}\to\type\) is the kind that takes an equal number of type variables as arguments
  and returns a type.

  The second part of the erasure of vector yields
  \(\per{\vect}:\Pt{\alpha}(\alpha\to\alpha\to\bool)\to\nat\to(\vect\
  \alpha)\to(\vect\ \alpha)\to\bool\). \(\ott\Delta\) includes, additionally to the type variables
  from the previous point, the term variables. This includes the PER generated by the erasure
  of the type variables as well as the term argument \(\nat\).

  Finally, we get the axiom \(\ass
  \Pt{\alpha}\forall\per{\alpha}:\alpha\to\alpha\to\bool.\forall n:\nat.PER(\per{\alpha})\impl PER(\per{\vect}\
  \alpha\ \per{\alpha}\ n)\). 
  Compared to the last two results of the erasure, we now have additionally the assertion that
  the type argument comes equipped with a PER. As this is now a validity statement (as opposed
  to a typing statement) term arguments are now bound by \(\forall\) and \(\impl\).
  
  Readers may note the absence of the according statements for the type
  \(\nat\). In the case of non-dependent base types, the PER collapses to standard equality, resulting in trivial axioms, which we omit.
\end{example}

Our translation subsumes that of \cite{RRB23ext}: If we specialize to DHOL,
contexts must not contain type variables and the corresponding cases in the translation of contexts and substitutions and the definition of the PER can be dropped.
The arguments $\Delta$ of a type symbol declaration $a$ in a theory contain only type
variables, and references $a\,\delta$ take only type arguments, so that (i) $\ot{\tr\Delta}$
and $\ot{\tr\delta}$ are empty (ii) $\ott{\tr\Delta}$ contains only the declarations $x:\tr A$
for every type $A$ in $\Delta$ (iii) $\ott{\tr\delta}$ contains only the terms $\tr t$ for every term $t$ in $\delta$

Finally, the argument list $\delta$ of a term symbol $c$ is empty and thus so is $\ott{\tr\delta}$.

\paragraph*{Properties of the Translation}\label{sec:meta}
\begin{theorem}[Preservation of Judgments and Substitution]\label{thm:preserve}
Every PDHOL judgment in the table below implies the corresponding PHOL judgment about the translated syntax:
\begin{center}
	\begin{tabular}{|l|l|}
		\hline 
		PDHOL & PHOL \\ 
		\hline
		$\Thy T$ & $\Thy {\tr T}$\\
		$\Ctx \Gamma$ & $\Ctx[\tr T]{\tr\Gamma}$\\
		$\SubG \delta \Delta$ & $\Sub[\tr T]{\tr\Gamma}{\tr\delta}{\tr\Delta}$\\
		$\TypeG A$ & $\Type[\tr T]{\tr\Gamma}{\tr A}$ and \\
		   & \quad $\Term[\tr T]{\tr\Gamma}{\per A}{\tr A\to\tr A\to\bool}$ and $\True[\tr T]{\tr\Gamma}{\isper{\per A}}$\\ 
		$\TermG tA$ & $\Term[\tr T]{\tr\Gamma}{\tr t}{\tr A}$ and $\True[\tr T]{\tr\Gamma}{\perapp{A}{\tr t}{\tr t}}$\\
		$\TrueG F$ & $\True[\tr T]{\tr\Gamma}{\tr F}$ \\
		$\TEqG AB$ & $\TEq[\tr T]{\tr\Gamma}{\tr A}{\tr B}$ and  $\True[\tr T]{\tr\Gamma}{\eq{\tr A\to\tr A\to\bool}{\per A}{\per B}}$\\ 
		\hline
	\end{tabular}
 \end{center}
 
Moreover, whenever $\SubG \delta \Delta$, we have
\begin{center}
	\begin{tabular}{|l|l|}
		\hline 
		PDHOL & PHOL \\ 
		\hline
		$\Type{\Gamma,\Delta}A$ & $\TEq[\tr T]{\tr\Gamma}{\tr{A[\delta]}}{{\tr A}[{\tr\delta}]}$ \\
		$\Term{\Gamma,\Delta}tA$ & $\True[\tr T]{\tr\Gamma}{\eq{\tr{A[\delta]}}{\tr{t[\delta]}}{{\tr t}[\tr\delta]}}$\\ 
		\hline
	\end{tabular}
\end{center}
\end{theorem}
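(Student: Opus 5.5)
We prove both parts of the theorem together by a single mutual induction on PDHOL derivations. The substitution part is needed inside the first part: rules such as \Rtermapp, \Rsubterm, \Rtermsym and \Rvalsym produce conclusions containing $B[u]$, $A[\delta]$ or $\Phi[\delta]$, so we must be able to move substitution through the translation. We therefore first state the substitution lemma for syntactic objects only. For every type $A$ and term $t$ over $\Gamma,\Delta$, we aim for the stronger syntactic claims $\tr{A[\delta]} = \tr A[\tr\delta]$ (literal identity) and $\tr{t[\delta]} = \tr t[\tr\delta]$. We also need the PER version $\per{(A[\delta])} = (\per A)[\tr\delta]$.

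Each of these follows by structural induction on $A$ and $t$. For a type variable $\alpha$ declared in $\Delta$, the translated substitution $\tr\delta$ supplies exactly $\tr B$, $\per B$ and $\checkmark$ for the substitute $B$, which matches the three declarations $\alpha$, $\alpha^*$ and $\ass\isper{\alpha^*}$. For $a\ \delta'$ and $c\ \delta'$ we use that $\ot{\tr{\cdot}}$ and $\ott{\tr{\cdot}}$ commute with substitution componentwise. The binder and PER cases for $\P xAB$ are routine capture-avoiding substitution. Since the identities hold on the nose, the stated PHOL type equality and term equality follow by reflexivity.

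With this lemma available, we induct on derivations for the table. The structural rules in Fig.~\ref{fig:declrules} split as follows:
\begin{itemize}
\item \Rthytype, \Rthyterm, \Rthyass: the generated PHOL declarations are well-typed using the induction hypotheses on $\Delta$ and $A$. The abbreviations $\ot{\tr\Delta}$ and $\ott{\tr\Delta}$ remain legal because, in PHOL, dropping assumptions and term variables from a context preserves well-formedness of the types involved.
\item Context rules: the extra declarations $\alpha^*$, $\ass\isper{\alpha^*}$ and $\ass\per A\,x\,x$ are well-formed by the hypotheses.
\item Substitution rules: the $\checkmark$ obligations are exactly $\isper{\per A}$ (given by the IH for $\TypeG A$) and $\perapp{A[\delta]}{\tr t}{\tr t}$ (given by the IH for $t$), transported along the substitution lemma.
\item Lookup rules \Rtypesym, \Rtermsym, \Rvalsym: instantiate the translated axiom $\tr\Delta\impl\ldots$ with $\tr\delta$, discharge its hypotheses by the $\checkmark$s in $\tr\delta$, and rewrite with the substitution lemma.
\end{itemize}

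For the expression rules in Fig.~\ref{fig:exprules}, the cases go as follows:
\begin{itemize}
\item \Rtypepi: show $\isper{\per{(\P xAB)}}$. Symmetry and transitivity lift pointwise from those of $\per A$ and $\per B$; for transitivity we use the standard trick of going through $\perapp Ayy$, which holds by symmetry and transitivity of $\per A$.
\item \Rtermlam: show $\perapp{\P xAB}{\tr{\lambda}}{\tr\lambda}$. Given $\perapp Axy$, we need $\per B\,\tr t[x]\,\tr t[y]$. This is the congruence property that $\tr t$ respects PERs, so we strengthen the IH for terms to: if two translated substitutions are related by the PERs, then the translated terms are PER-related.
\item \Rtermapp: immediate from the PER definition and the substitution lemma.
\item \Rtermeq, \Rtermimpl: direct.
\item \Rteqsym and \Rteqpi: yield provable equality of the PERs, using extensionality in PHOL.
\item \Rtermteq: rewrite along the equality of the PERs.
\item Validity rules (congruence, $\beta$, $\eta$, \Rboolext, implication): each maps to the corresponding PHOL reasoning about PERs. $\beta$ uses the substitution lemma; $\eta$ and congruence use the PER-respecting property.
\end{itemize}

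I expect the main obstacle to be this PER-respecting (logical relations) strengthening of the induction hypothesis, needed for $\lambda$ and for congruence. It has to be threaded through assumptions and dependent implication, so the hypothesis must be formulated over pairs of PER-related substitutions for $\tr\Gamma$. A further subtlety is that a type variable now stands for an arbitrary abstract PER $\alpha^*$, whereas in DHOL every PER is built from base symbols. I would check that \Rteqvar and the instantiation of polymorphic axioms go through purely from the assumption $\isper{\alpha^*}$, reusing the DHOL proof of \cite{RRB23ext} case by case for everything else.
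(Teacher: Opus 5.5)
Your plan has the same overall shape as the paper's proof: induction on derivations, which the paper only illustrates for the TpSym case. Your lookup cases match that worked case, and you are right that a logical-relations strengthening is needed. However, two of your cases fail as written, for a common reason.

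In TpEqSym, the induction hypothesis for a term component of $\SubEqG\Delta\delta{\delta'}$ yields only $\perapp{A_i}{\tr{t_i}}{\tr{t'_i}}$, not $\tr{t_i}=\tr{t'_i}$. So extensionality cannot give $a^*\,\ott{\tr\delta}=a^*\,\ott{\tr{\delta'}}$. You would need $a^*$ to produce equal relations on PER-related term arguments, but the translated theory only asserts that each $a^*\,\ldots\,\vx$ is a PER.

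In TpPi, transitivity works if you route through $\perapp Axx$ rather than $\perapp Ayy$. Symmetry, however, is not pointwise. From $\perapp{\P xAB}fg$ and $\perapp Axy$ you obtain only $(\per B)[y/x]\,(f\,y)\,(g\,x)$. Turning this into $\per B\,(g\,x)\,(f\,y)$ requires $\per B=(\per B)[y/x]$ whenever $\perapp Axy$.

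Your term-level strengthening needs this same type-level invariance. It is needed already in the case $\eq Atu$, where you must compare $\per A[\gamma_1]$ with $\per A[\gamma_2]$ for PER-related substitutions $\gamma_1,\gamma_2$.

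This invariance cannot be obtained from what the translation provides: with the translation exactly as stated, the required PHOL judgments are not derivable. Take $T=(a:\Pt\alpha\P x\alpha\type)$ and $\Gamma=\tp\alpha,\,p:\alpha,\,q:\alpha,\,\ass{\eq\alpha pq}$, so that $\TEqG{a\,\alpha\,p}{a\,\alpha\,q}$ holds by TpEqSym. Now build a model as follows:
\begin{itemize}
\item interpret $\alpha$ as $\{0,1\}$, with $p\mapsto 0$, $q\mapsto 1$, and $\alpha^*$ the total relation;
\item interpret $a\,\alpha$ as $\{0,1\}$;
\item let $a^*\,\alpha\,\alpha^*$ be the total relation at $0$ but $\{(0,0)\}$ at $1$, with arbitrary PERs elsewhere.
\end{itemize}
All axioms of $\tr T$ and all assumptions of $\tr\Gamma$ hold. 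Yet $a^*\,\alpha\,\alpha^*\,p\neq a^*\,\alpha\,\alpha^*\,q$. Moreover, $\per{(\P x\alpha{a\,\alpha\,x})}$ relates $f=(0\mapsto 1,\,1\mapsto 0)$ to the constant function $0$ but not conversely, so it is not a PER.

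So the subtlety you flagged about arbitrary $\alpha^*$ is real, but it bites at TpEqSym and TpPi, not at TpEqVar. The paper's sketch does not address it either.

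To close the gap you would need two changes:
\begin{itemize}
\item extend the translation of $a:\Pg\Delta\type$ (and of dependent type variables in Section~\ref{sec:dtv}) with an axiom stating that $a^*$ yields equal relations on componentwise PER-related term arguments;
\item add to your strengthened hypothesis the clause $\per A[\gamma_1]=\per A[\gamma_2]$ for PER-related $\gamma_1,\gamma_2$.
\end{itemize}
With these, TpEqSym, symmetry in TpPi, and the equality case of your logical-relations argument go through.

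A minor point: your claimed literal identity $\tr{t[\delta]}=\tr t[\tr\delta]$ holds only up to $\beta$. This is because $\alpha^*\,u\,v$ becomes a redex when $\alpha^*$ is replaced by a compound $\per B$. This is harmless, since PHOL has $\beta$.
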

The proof of Theorem~\ref{thm:preserve} is straightforward and performed by induction over
derivations. An illustrative example is given in Appendix~\ref{app:comp}.

\begin{theorem}[Reflection of Truth]\label{thm:sound}
Assume a well-formed PDHOL theory \(\Thy{T}\). 
\begin{center}
    If \(\Gamma \vdash_T F:o\) and \(\tr{\Gamma} \vdash_{\tr{T}} \tr{F}\) then \(\Gamma\vdash_T F\).
\end{center}
In particular, if \(\Gamma \vdash_T s:A\) and \(\Gamma \vdash_T t:A\) and \(\tr{\Gamma} \vdash_{\tr{T}} \perapp{A}{\tr{s}}{\tr{t}}\), then \(\Gamma \vdash s =_A t\).
\end{theorem}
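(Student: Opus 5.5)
Completeness (reflection of truth) is the hard direction. Following the strategy used for monomorphic DHOL in \cite{RRB23ext}, we argue semantically rather than by inverting proofs. Syntactic inversion fails because a PHOL proof of $\tr F$ may pass through intermediate terms that are ill-typed in PDHOL, for example $\lambda$-abstractions whose erased types do not reflect any PDHOL type, or instances of axioms at PERs that are not of the form $\per A$.

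The plan has three steps.
\begin{enumerate}
\item \textbf{Reduce to PHOL-models of $\tr T$.} Assume $\Gamma\vdash_T F:\bool$ and $\tr\Gamma\vdash_{\tr T}\tr F$. By soundness of PHOL, $\tr F$ then holds in every (Henkin) model of $\tr T$ under every assignment satisfying $\tr\Gamma$.
\item \textbf{Build a PDHOL model from a PHOL model.} Conversely, from any PDHOL model $M$ of $T$ we construct a PHOL model $\tr M$ of $\tr T$. Interpret $a$ by the underlying carrier obtained as a disjoint union (or quotient-free lift) of the interpretations of $a\,\vA\,\vt$ over all term arguments. Interpret $a^*\,\vA\,\valphaper\,\vt$ as the relation ``both arguments lie in the fibre $[\![a\,\vA\,\vt]\!]$ and are equal there''. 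Polymorphism is handled by letting a type variable $\alpha$ range over arbitrary carriers, and by interpreting $\alpha^*$ as any PER on that carrier. This matches the translation of $\tp\alpha$ into $\alpha,\alpha^*,\ass\isper{\alpha^*}$.
\item \textbf{Prove a logical-relations lemma.} By induction on PDHOL typing derivations, show that for every well-typed $A$, the PER $[\![\per A]\!]$ in $\tr M$ is exactly equality on the image of $[\![A]\!]^M$. For well-typed $t:A$, show that $[\![\tr t]\!]$ is related to itself and represents $[\![t]\!]$. For formulas, show that $[\![\tr F]\!]=[\![F]\!]$.
\end{enumerate}
Then, if $\tr\Gamma\vdash\tr F$, $F$ holds in every PDHOL model of $T$ satisfying $\Gamma$. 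Completeness of PDHOL with respect to its semantics finishes the argument, where that completeness comes from a term-model/Henkin construction extending the one for DHOL with type variables interpreted as arbitrary types.

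If a semantic route is unavailable, we would use the syntactic alternative of \cite{RRB23ext}. There we define a reverse translation on \emph{normalized} PHOL proofs, of cut-free or long $\beta\eta$-normal form, whose intermediate formulas are all of the form $\tr G$ with $G$ well-typed, or are PER guards $\perapp Axx$. In that setting the key lemma is that any PHOL proof of $\tr F$ can be transformed into one in which every quantifier is instantiated only with terms $u$ for which $\perapp A u u$ is provable. Such terms can then be replaced by erasures of well-typed PDHOL terms.

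In either version, the main obstacle is the new polymorphic layer. In PHOL, a type variable $\alpha$ may be instantiated with an arbitrary type $B$, and $\alpha^*$ with an arbitrary PER $r$ on $B$, not necessarily $\per A$ for a PDHOL type $A$. Handling this requires a lemma that any such $(B,r)$ arises semantically from some PDHOL type, for instance by adjoining a fresh type variable and using the parametricity of the translation. This is the step we expect to need the most care; the remaining cases follow \cite{RRB23ext}.

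The ``in particular'' clause follows immediately. Since $\tr{(s=_A t)}=\perapp A{\tr s}{\tr t}$ and $\Gamma\vdash s=_At:\bool$ by \Rtermeq, the general statement applies with $F:=s=_A t$.
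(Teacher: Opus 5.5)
Your main route is semantic. The paper's proof is proof-theoretic, and so is the proof in \cite{RRB23ext} that it extends, contrary to your attribution. It first establishes type-wise injectivity of the erasure (Lemma~\ref{lem:tyInj}). It then uses $\mathit{norm}$ and the statement transformation $\mathit{sRed}$ to turn any PHOL derivation of $\tr{F}$ into an admissible one in which every term is almost proper (Lemma~\ref{lem:sRed}); $\mathit{sRed}$ $\beta\eta$-normalizes and replaces unnormalizably spurious applications by default terms $\omega_B$. Finally it lifts that derivation rule by rule to a PDHOL derivation of the unique quasi-preimage.

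Your semantic route has a genuine gap exactly where this work is done. First, it needs a completeness theorem for PDHOL. The paper does not provide one, and it is a substantial result in its own right. Because a HOL-like proof system can only be complete for Henkin-style models, $M$ must be allowed to have non-full function spaces.

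Second, your step~3 then fails unless you specify the function spaces of $\tr{M}$, which you never do. Suppose $\tr{M}$ uses full function spaces, and take $a:\nat\to\type$. The classes of $\per{(a\,n\to\bool)}$ in $\tr{M}$ are then \emph{all} functions on the fibre of $a\,n$, not just those present in $M$. So whenever $M$ is non-standard there, the guarded quantifiers of $\tr{F}$ range over strictly more objects than those of $F$. With alternating quantifiers, truth of $\tr{F}$ in $\tr{M}$ then no longer implies truth of $F$ in $M$, and the claim that the PER is exactly equality on the image of the interpretation of $A$ is false. Suppose instead you restrict the function spaces to functions tracked by $M$. Then you must show that $\tr{M}$ still satisfies Henkin comprehension for every PHOL term. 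That includes improper terms that apply erased dependent functions on the wrong fibre, e.g.\ $\lambda x:a.\,c\,5\,x$ evaluated on elements of fibre $6$. You must also choose the off-fibre values of the interpreted constants so that all such definable functions stay tracked. Reconciling comprehension with an exact match of the PER-quotients is precisely the spurious-term problem, and the proposal only moves it into an unspecified model construction.

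Your syntactic fallback does not close the gap either. $\beta\eta$-normal or cut-free derivations can still instantiate quantifiers with unnormalizably spurious terms, and congruence or $\beta$ steps can still generate them. Moreover, provability of $\perapp{A}{u}{u}$ does not by itself let you replace $u$ by the erasure of a well-typed PDHOL term. For example, with $c:\P{n}{\nat}{(a\,n\to\bool)}$, the term $u=\lambda x:a.\,(c\,6\,x\vee\neg c\,6\,x)$ satisfies the guard at type $a\,5\to\bool$ but contains an ill-typed application. Two ingredients are missing:
\begin{itemize}
\item the type indexing together with type-wise injectivity, which fixes a unique quasi-preimage for each term;
\item the replacement of spurious applications by $\omega_B$ in $\mathit{sRed}$, which is what makes every step admissible.
\end{itemize}
Your worry that $\alpha^*$ may be instantiated with a PER not of the form $\per{A}$ is reasonable. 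However, your proposed fix (every such pair $(B,r)$ arises from a PDHOL type via parametricity) is only gestured at. In the paper, polymorphism is absorbed by three observations: type variables erase to themselves, $\mathit{norm}$ is the identity on type-applied constants, and the indexing is agnostic to type variables. Your derivation of the ``in particular'' clause from the main statement, using the rule TermEq, is fine.
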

The assumption about the well-typedness of the statement is necessary: Consider two
PDHOL-terms \(u := \lambda x:A\ s.x\) and \(v := \lambda x:A\ t.x\) of some dependent type
\(a\) and different terms \(s, t\). In (P)DHOL an equality \(u =_{\Pi x:A\ s.A\ s} v\) between
them would be ill-typed. The erased equality however is not only well-typed but provable.

    The original proof for DHOL~\cite{RRB23ext} is rather involved.
    Luckily it is easy to extend it to PDHOL.
    Intuitively, a proof of a PHOL statement \(\tr{F}\) is translated into a proof that exists in the image of the translation.
    This allows us to subsequently read off a PDHOL proof of the untranslated conjecture
    \(F\). An overview of the original proof together with the necessary adaptations is given
    in Appendix~\ref{app:sound}.
In the remainder, we will call the combination of these properties \emph{well-behavedness} of
the translation.

\section{Implementation and Case Studies}\label{sec:case}





To evaluate PDHOL and obtain a theorem prover for it, we implemented our translation as part of the
logic-embedding tool by Steen~\cite{leo_embedding}.
This enables discharging PDHOL proof obligations by existing PHOL ATPs.
Our tool, as well as the dialect used to formalize this set of problems, has since been adopted as TPTP standard in the form of the new DHF dialect and the DT2H2X prover on
SystemOnTPTP~\cite{R+25}.

As outlined in Section~\ref{sec:meta}, the translation requires the problem to be well-typed.
Therefore, the use of a PHOL ATPs is only sound if we first obtain a type-checker for PDHOL.
Because type-checking for PDHOL is undecidable, our tool transforms each conjecture $F$ into $n+1$ PHOL conjectures:
$n$ type-checking obligations (TCOs) that establish the well-typedness of the problem plus $F$ itself.

As an optimization, we replace PERs in our translation with equality whenever there is no dependence on the term arguments.
From an informal comparison with previous data, this seems to yield a speedup of about 5\% on
the presented problems.

We created a set of 53 problems to evaluate and test our implementation on.
These include 18 problems that are polymorphic versions of the examples created in~\cite{dhol_satallax} for DHOL, and 17 problems that experiment with the impact of instantiating type variables in those conjectures.
11 more problems are lemmas that occur during an inductive proof that the reversal function on
red-black trees is an involution (see below).
Additionally, there are one formalization each of the zip function as used in functional
programming and of finite sets.
The remaining problems focus on the formalization capabilities of PDHOL as presented in
Section~\ref{sec:std}, in the form of the group problem. This is not yet supported by the
implementation.

Three PHOL ATP systems were combined with our translator to create PDHOL provers: Vampire
v4.7~\cite{vampire} in portfolio mode, Leo-III v1.6~\cite{leo3_modern} and Zipperposition
v2.1~\cite{C15} in higher-order and portfolio mode.
The reported times do not include problem translation time, which amounted to
198\(\pm\)33ms.
All experiments ran on an Intel Core i5-6200U CPU at 2.30GHz and 8 GB of RAM.
All files and binaries, including the group example from Section~\ref{sec:std}, can be
found in the supplementary material.\footnote{\url{https://zenodo.org/records/19931555}}

In the following, we give an overview of those problems. The names of the TPTP files as can be
found in the supplements are given in parentheses, with asterisks acting as wildcards.

\makecn{fin}
\makecn{nat}
\makecn{elemAt}
\makecn{transpose}
\paragraph*{Fixed-Length Lists}
We did a deep formalization of lists following the examples in Section~\ref{sec:syn} and \ref{sec:trans}.
We also formalized finite types $\fin:\nat\to\type$, and used an accessor $\elemAt:\Pg\alpha\P n\nat\vect"{\alpha,n}\to\fin"n\to\alpha$ for safe access to list elements.
The conjectures we investigated expressed properties of the 
append function --- associativity (\texttt{list-app-assoc*}) and the identity of nil
(\texttt{list-app-nil*}) --- as well as the  
involution property of the reverse function (\texttt{list-rev-nil*}). 
These problems generated more complex TCOs than the red-black trees, due to the arithmetic
needed to show that 
lengths of lists line up after appending.
From this, matrices can then be represented by nesting vectors.
We formalize, e.g., matrix transposition as $\transpose:\Pg\alpha\P {m,n}\nat
\vect"{(\vect"\alpha"m)}"n\to\vect"{(\vect"\alpha"n)}"m$, and formalize commutativity of matrix
addition (\texttt{inst-poly-matrix-add-com}) and element-wise multiplication
(\texttt{inst-poly-matrix-mul-com}).

This leverages both dependent types and polymorphism to concisely represent complex data
structures in a way that guarantees their dimensional invariants.

Another formalization leveraging the list basis, is our formalization of the zip function.
Note that Haskell's default List
package\footnote{\href{https://hackage.haskell.org/package/base-4.21.0.0/docs/Data-List.html\#v:zip3}{hackage.haskell.org/package/base-4.21.0.0/docs/Data-List.html}}
includes distinct functions \texttt{zipN} for \(N \in \{3,\ldots,7\}\) arguments due to the lack
of dependent types. Our formalization (\texttt{zip}) acts like an arbitrary dimensional zip
function.

Adding a formalization of the finite set (\texttt{finite-type}) allows to have a unfailing
getter function. This example is set up in a way that type-checks, but where the types do not
make sense: They would allow getting an element of the empty list. Indeed, as can be seen in
the results in Appendix~\ref{app:res}, Leo-III returns the SZS Status ``Contradictory Axioms''.

\vspace{-3mm}
\paragraph*{Red-Black Trees}
Red-black trees are binary search trees for which self-balancing is 
achieved by coloring 
nodes
red or black.
Leaves are always black. In addition, two
constraints are maintained: \emph{The child of a red node must not be red} and
\emph{Every path from a node to its leaves has the same number of black nodes}.
Such invariants are difficult to capture by non-dependent inductive types.
However, in PDHOL, we can use a declaration 
\(\text{rbtree : } \Pt{\alpha}\P{b}{\bool}\P{n}{\nat} \type\) 
where
\(\text{rbtree}\ A\ b\ n\)
is the type of red-black trees holding values of type \(A\)
containing \(n\) black nodes. This allows capturing the invariant directly in the type.
The formalized conjecture states that reversing is an involution.


The standard proof of this fact involves inductive definitions that are difficult
for ATP systems to deal with. In line with previous work in DHOL~\cite{dhol_satallax,RBK24} we
split the problem into smaller sub-problems, ensuring well-typedness.

The problem is split into a base case for red (\texttt{rbt-rev-invol-base-RTLeaf}, the prefix
\texttt{rbt-rev-invol} will be omitted from here on) and black (\texttt{-base-BTLeaf})
leaves. Additionally, there is a problem file asserting 
that, if a property holds for red leaves and black leaves, it holds for all red-black trees of
black-height 1 (\texttt{-base-lemma}). This lemma is used to prove the base case
for any tree of black-height 1 (\texttt{-base}).

Next are the step cases: Again there is a step case for red (\texttt{-RTStep}) and for black
(\texttt{-BTStep}) leaves. While the red
step case is easy, the black is challenging due to the raised complexity of black nodes. While
possible to prove problem \texttt{rbt-rev-invol-BTStep} directly, we found three sub-steps
(\texttt{-subBTStep[1-3]}) and added them as axioms to help the ATP system along
(\texttt{-composedBTStep}). 

While the instanced induction scheme (\texttt{-indinst}) timed out, type checking
was successful. Assuming the instanced induction allows to proof the main conjecture
(\texttt{rbt-rev-invol}) in just 14s.


Finally, we specified one problem not related to the reversal function (\texttt{bad\_tree}). It serves as a sanity
check for the formulation of red-black trees: a conjecture trying to create an ill-formed
red-black tree. It postulates that for every red tree, there are two children of arbitrary
color. This violates the invariant that a red node must not have red children. Indeed, neither
the conjecture itself nor the generated TCOs 
can be proven. 

Notably, the red-black tree examples generate very few TCOs (4
in total) compared to the list examples, where the majority of examples generated 1+ TCOs.
This is 
because most constraints, thanks to the efficient formulation 
provided by the dependent types, reduce to reflexivity. 

\paragraph{Results and Discussion}
The results of our experiments can be found in Appendix~\ref{app:res}.
A significant number of statements can be proven. This is notably without any fine-tuning of
the provers to the specific challenges translated PDHOL problems pose. This illustrates the
viability of the translate-and-proof method and, in particular, shows that PDHOL retains
automation support while providing a rich addition to expressivity.

Notably, the results support that undecidable type-checking is a price worth paying --- no TCOs
were left unproven. Frequently the TCOs were trivial and required no call to the ATP
systems. While this will not be the case for every problem, it is promising to see that for
these common data-structures type-checking is efficient.

However, for many advanced conjectures, e.g., showing that transposing matrices is an
involution, the proofs timed out. 
We believe this is because the necessary induction arguments are too difficult for current ATPs (independent of how the data structures are formalized). 
This is in line with the results reported by Niederhauser et al.~\cite{dhol_satallax},
indicating that performance improvements can be obtained by building DHOL-specific and/or
induction-aware provers, or
splitting problems in a way that helps the ATP system with the induction arguments.

\section{Dependent Type Variables}\label{sec:dtv}

%
%
%

For simplicity, we have so far not described the feature of type variables depending on term variables as in $\tp{\alpha},\,\beta:\alpha\to\type$.
The following table shows the possible dependencies for variables in \emph{contexts} akin to how we did it in Section~\ref{sec:syn} for \emph{theories}:
\begin{center}
\begin{tabular}{|l|l|l|l|}
\hline
 & \multicolumn{3}{c|}{declaration of} \\
depending on & type variable & term variable & local assumption \\
\hline
type variable & \multicolumn{3}{c|}{not allowed to avoid size issues} \\
term variable & this section & definable via $\Pi$ & definable via $\forall$ \\
local assumption&  \multicolumn{2}{c|}{not allowed} & definable via $\impl$\\
\hline
\end{tabular}
\end{center}
Contrary to global declarations in theories, variable declarations must not bind type variables.
This ensures that contexts are small and we can formulate the semantics without requiring a hierarchy of universes or similar.
After adding the final feature of dependent type variables, the grammar of PDHOL becomes:
\begin{center}
\begin{tabular}{|l c l | l c l|}
  \hline
  \(T\) & ::= & \multicolumn{4}{l|}{\(.\) | $T, a:\Pg{\hldtv{\Gamma}}\type$ | $T, c: \Pg{\hldtv{\Gamma}} A$ | $T, \ass\Pg{\hldtv{\Gamma}} F$}\\
  $K$   & ::= & $\type$ | $\hldtv{\P xA} K$ & $k$ & ::= & $A$ | $\hldtv{\lam xA} k$ \\

  \(\Gamma\) & ::= & \(\circ\) | $\Gamma, \alpha:K$ | $\Gamma, x: A$ | $\Gamma,\ass F$ & \(\gamma\) & ::= & \(\bullet\) | $\gamma, k$ | $\gamma, t$ | $\gamma,\checkmark$ \\
  \(A,B\) & ::= & $a\:\gamma$ | $\alpha\:\hldtv{t\ \ldots\ t}$ | $\P xAB$ | $\bool$ & \(t,u\) &
                                                                                                ::= & $c\:\gamma$ | \(x\) | $\lam xAt$ | $t\:u$ | $t\Rightarrow u$ | $\eq Atu$ \\
  \hline
\end{tabular}
\end{center}
Here $k$ produces types with uninstantiated term dependencies, and $K$ produces their kinds.
The grammar used so far arises as the special case $K=\type$ and $k=A$.
Their typing is given by the rules
\[\rul[Rtermlam']{}{\Term{\Gamma,\Delta}B\type}{\TermG{\lamg\Delta B}{\Pg\Delta\type}} \quad
\rul[Rtypesym']{}{\alpha:\Pg\Delta\type\in \Gamma \quad \SubG \delta\Delta}{\TypeG{\alpha\ \delta}} \quad \text{where } \Delta=\vx:\vA\]
Adjusting \Rctxtype and \Rsubtype accordingly is straightforward.

The grammar above describes the arguments of global declarations as a single context (as in $\Pg\Gamma$) rather than a list of type variables followed by a list of term variables (as in $\Pt \valpha\P\vx\vA$).
This is necessary because term variables $x:A$ may now occur in a type variable declaration $\alpha: A\to \type$.
Therefore, we can no longer assume that all type variables are bound before all term variables.
The typing rules for theories remain unchanged except that we to modify the syntactic restrictions on the context in \Rthytype, \Rthyterm, and \Rthyass.
For the same reason as for global declarations in theories, we do not allow local assumptions in $\Gamma$.

\begin{example}
We give heterogeneous lists as a variant of vectors where each component may have a different type.
Consequently, all operations must take a dependent type variable $\alpha:\fin\ n\to\type$ that provides the types of the $n$ components.
This uses finite types $\fin\ n=\{0,\ldots,n-1\}$, which we can declare by
\begin{gather*}
  \cn{fin} :\nat \to \type \qquad
  \fnext: \P n\nat \fin\ n \to \fin{(\suc\ n)}\\
  \ftop: \P n\nat \fin\ {(\suc\ n)}
\end{gather*}
Then heterogeneous lists can be declared as
\begin{gather*}
  \cn{Hlist}: \P n\nat (\fin\ n \to \type) \to \type \\
  \hlist: \P n\nat \P L{\fin\; n \to\type} (\fin\ n \to L\ n) \to \Hlist n L \\
  \hget:  \P n\nat \P L{\fin\; n \to\type} \Hlist n L \to \P i {\fin\; n} L\ i
\end{gather*}
Note how type and term variables alternate, e.g., in the declaration of $\hlist$.
\end{example}

The translation rules for declarations in a theory remain unchanged except for generalizing the syntactic restrictions on the contexts and substitutions.
We only need to adjust the three cases below for the translation of type variables.
The invariants are the same.
\begin{center}
\begin{tabular}{|ll|l|}
\hline
& PDHOL  & PHOL\\
\hline
type variable declaration & $\alpha:\Pg\Delta\type$ where $\Delta=\vx:\vA$ & $\tp\alpha$ \\&& $\alpha^*:\ott{\tr\Delta}\to\alpha\to\alpha\to\bool$, \\&& $\ass \tr\Delta\impl\isper{\alpha^*\ \vx}$ \\
type variable substitute & $\lamg\Delta A$ where $\Delta=\vx:\vA$ & $\tr A,\; \lamg{\ott{\tr\Delta}} \per A$,\; \checkmark \\
type variable reference & $\alpha\ \delta$ & $\alpha$\\
\hline
\end{tabular}
\end{center}

\section{Subtype Definitions}\label{sec:std}

Large HOL theories are usually built by chaining conservative extension principles.
Besides direct definitions, the most important such principle used in HOL-based ITPs is definitional subtypes \cite{holsemantics}.
For example, the theory-level declaration $a:=A|p$ for some predicate $p:A\to\bool$ conservatively extends the containing theory with declarations for a fresh (undefined) type $a$ and fresh functions $Abs:A\to a$ and $Rep:a\to A$ that are axiomatized to be bijections between $a$ and the set of elements of $A$ that satisfy $p$.
This extension principle is expressive enough to define, e.g., record and inductive types.
But it becomes particularly powerful in the presence of polymorphism, where it can introduce new type \emph{operators} $a$.
For example, simple product types can be defined using $a\,\alpha\,\beta:=(\alpha\to\beta\to\bool)|p$ where $p$ is the property that $f$ is true for exactly one argument pair.

Generalizing this extension principle to dependent types further increases its expressivity and enables PDHOL to make complex type definitions.
We extend our language as follows:
\[T\;\;  ::=\;\; T,\, a:=\lamg{\Delta}A|p \quad\text{for some}\quad \Delta=\tp\valpha,\,
  \vx:\vA\]
with a typing rule requiring $\Term\Delta p{A\to \bool}$ and a proof obligation that we discuss below.
Abbreviating $\delta:=\valpha\,\vx$, its semantics is defined by elaboration into
\newcommand{\aq}{a\,\delta}
\newcommand{\Repq}[1]{Rep"{\delta,#1}}
\newcommand{\Absq}[1]{Abs"{\delta,#1}}
\begin{align*}
a:\Pg\Delta \type& \qquad
  Abs: \Pg\Delta A\to \aq \qquad
  Rep: \Pg\Delta\aq\to A\\
\ass \Pg\valpha \forall\vx:\vA.&\big(\forall u:\aq.\;p"{(\Repq u)}\wedge\eq{\aq}{\Absq{(\Repq
  u)}}u\big)\\
  \wedge \;&\big(\forall v:A.\;p"v\Rightarrow \eq A{\Repq{(\Absq v)}}v\big).
\end{align*}
In the same way as for HOL subtype definitions, in the second part of the axiom, the predicate $p$ occurs crucially to require $\eq A{\Repq{(\Absq v)}}v$ only for those $v$ that are in the intended subtype, capturing that we think of $Abs$ as a partial function.
Because all functions are total, $\Absq v$ is also well-typed if $\neg (p\,v)$ but remains unspecified.
That is conservative if the new type $a$ is non-empty, which is why HOL additionally requires the proof obligation $\exists v:A.p\,v$.
While this is natural for HOL (where all types are assumed to be non-empty anyway), this is not ideal for DHOL, where empty types are useful and allowed.
Nonetheless, we adopt the same proof obligation here for simplicity, i.e., a subtype definition is well-typed only if $\True\Delta{\exists v:A.p\,v}$.
Our translation is in fact judgment preserving and truth reflecting for weaker conditions, but we leave the details to future work.

We extend our translation as follows: every subtype definition $a:=\lamg{\Delta}A|p$ is translated to the corresponding PHOL subtype definition $a:=\lamg{\valpha} \tr A|(\lam u{\tr A}\perapp Auu\wedge p\,u)$.
This translation commutes with elaboration: we obtain isomorphic PHOL theories if we (i)
elaborate a PDHOL subtype definition and then translate it 
, or (ii) translate it to a PHOL definition and then apply the usual PHOL elaboration.
Consequently, 
the translation remains well-behaved.

\makecn{group}
\makecn{isgroup}
\makecn{ishom}
\makecn{hom}
\makecn{neut}
\makecn{op}
\newcommand{\Rep}[1]{\mathrm{Rep}_{#1}}
\newcommand{\Abs}[1]{\mathrm{Abs}_{#1}}

\begin{example}[Algebraic Structures]
Let us assume we have already formalized types for algebraic structures, such as a type $\group:\Pg\alpha\type$ with (among others) a selector $\op:\Pg\alpha\alpha\to\alpha\to\alpha$.
(In principle, the type $\group"\alpha$ could itself be defined as a subtype of $\alpha\to\alpha\to\alpha$.
But that would require a more complex analysis of conservativity because there is no group on the empty type.)

We can now use polymorphism to abstract over the carrier set and then use dependent types to formalize various constructions from universal algebra.
For example, we can define the predicate $\ishom:\Pg{\alpha,\beta}\,\group\,\alpha\to
\group\,\beta\to (\alpha\to\beta)\to\bool$ 
, and use that to define the polymorphic and dependent type of group homomorphisms \(\hom\) by
\begin{align*}
  \eq\bool{\ishom"{\alpha,\beta,G,H,m}}{\forall
    x,y:\alpha. \eq\beta{m"{(\op"{G,x,y})}}{\op"{H,(m"x),(m"y)}}}\\
  \hom:=\lambda_{\tp\alpha,\,\tp\beta,\,G:\group"\alpha,\,H:\group"\beta}
  (\alpha\to\beta)|\ishom"{\alpha,\beta,G,H}
\end{align*}
Similar examples are the types of subgroups of a group, conjugacy classes of a group, or actions of a group on a set, and accordingly for all other algebraic theories.

As an example theorem, we state that homomorphisms preserve the neutral element:
\begin{align*}
\Pg{\alpha,\beta}\;\forall
  G:\group"\alpha,\,H:\group"\beta,\,m:\hom"{\alpha,\beta,G,H}.\;\forall
  e:\alpha.\neut"{\alpha,G,e}\\
  \Rightarrow\neut"{\beta,H,(\Rep\hom"{m,e})}
\end{align*}
This example illustrates the expressiveness of the PDHOL language while assuaging lingering
doubts about undecidable type checking: Although its a complex formulation we cannot proof directly,
similar to examples in Section~\ref{sec:case}, the TCOs are discharged easily. 
\end{example}



\section{Conclusion and Related Work}
\paragraph*{Summary}
We extended dependently typed higher-order logic with polymorphism and subtype definitions.
A translation to polymorphic HOL yields an efficient automated reasoning procedure for the calculus.
We demonstrated the practical usability of the language and its automation by encoding and automatically proving several practical problems that combine polymorphism with dependent types and cannot be represented as concisely in either polymorphic HOL or monomorphic DHOL.

\paragraph*{Related Work}
While there are some practical examples that make deep polymorphism desirable, they tend to interact poorly with ATP systems.
Indeed, most logics that support deep polymorphism introduce a hierarchy of universes as in Martin-L\"of type theory \cite{martinlof}, which goes far beyond the expressivity of current ATP tools.
On the other hand, shallow polymorphism still allows standard set-theoretic semantics without any size issues.
That makes it the variant of choice in HOL theorem provers --- both interactive~\cite{hol,isabellehol,hollight} and automated~\cite{SWB17,vampire,V+22}.
Indeed, our definition has the key advantage that DHOL polymorphism can be directly translated to HOL polymorphism, ensuring that proof obligations remain efficiently solvable for the ATP system.

PVS \cite{pvs} provides a combination of dependent types and shallow polymorphism 
similar to PDHOL.
It combines native decision procedures with interaction and automation but it is too expressive for easy translation into standard ATP tools.
\cite{rabe:dholmodels:26} gives a model-theoretical semantics of DHOL that includes polymorphism.
Recently the Vampire automated theorem prover has been extended with shallow
polymorphism~\cite{BR20}. The extension is very elegant, but it focuses on non-dependent types
so far.

CoqHammer~\cite{CK18} tackles a similar problem but sits on a different end of the
expressivity and automation spectrum: It translates 
Rocq into first-order logic.
It is, however, incomplete.

The general idea of translating dependent type theories is not new: \cite{FM90} experimented with a translation of
LF into hereditary Harrop formulas, a simply-typed meta-logic.
Similarly, Jacobs and Melham~\cite{jacobs_dtt_hol} give a translation of dependent type theory into higher-order logic.
Both translate dependent types to unary predicates that serve as type guards.


Such translations are not necessarily sound and complete.
This is because refinement types are not closed under function type formation, i.e., the function type on refinement types cannot be represented as a refinement of a function type.
A similar argument applies to quotients.
This motivated the use of PERs, which subsume refinements and quotients, and are closed under function type formation.
Thus, many constructions that involve refinements or quotients of base types are eventually generalized to PERs in order to complete inductive definitions or proofs.
Because dependent base types can be understood as refinements of a bigger type, the semantics of dependent types is one such construction.
PERs have been used to formulate the semantics of dependent types, in essentially the same way as we do, going back to at least \cite{persemantics} and were used for the semantics of NuPRL~\cite{nuprlpers}.
Another example are parametricity arguments in polymorphic type theories such as~\cite{BJP10}.
For example, recently, \cite{hol_pers} presented a parametricity translation from HOL to itself that interprets every type as a PER.
That translation arises as the special case of ours where the input is restricted to the HOL fragment of DHOL.
Because, their source and target language are the same, they additionally study which HOL-style subtype definitions can be translated to themselves, leading them to introduce the notion of wideness.\\[-6mm]

\paragraph*{Future Work}
Even though our automation is well-behaved, it is still not as strong as desirable.
To improve this, we plan to define dedicated automated reasoning rules for PDHOL
similar to Niederhauser et al.~\cite{dhol_satallax} for DHOL. 
Furthermore, a larger case study involving dependently typed examples from ITPs 
would allow checking if PDHOL constitutes a
useful intermediate language for proof automation.  

Finally, we want to combine our work with the extension of DHOL with subtyping from~\cite{RR:dholsub}.
While we expect simply merging the language extensions to be straightforward, their union allows adding \emph{bounded} polymorphism where type variables $\alpha<:A$ can only be instantiated with subtypes of $A$.
Intuitively, the type system and translation can be extended easily to allow such upper
bounds on type variables, but it is unclear how difficult the judgment preservation and truth
reflection proofs and the design of a (sub)type-checking algorithm will be in that case.

\subsubsection*{Acknowledgements}
Rabe was supported by the FAUstairs project (see\\ https://www.faustairs.fau.de/),
funded by the Stiftung Innovation in der Hochschul- lehre under grant StIL:1001-3096. Ranalter and
Kaliszyk were supported by the Renaissance Philanthropy grant DEEPER.

\bibliography{rabe,systems,pub_rabe,historical,extra}

\appendix
\section{Preservation}\label{app:comp}
\begin{proof}
  Thoerem~\ref{thm:preserve} is proved by straightforward induction on derivations. The
  sub-proofs for the individual proof rules follow the same structure. We therefore present
  here only one example (the \Rtypesym rule) --- applying the induction hypothesis followed by definitions of the erasure in full detail:
  \begin{align}
    &\CtxT{\tr{\Gamma}} && assumption \label{eq:1}\\
    &\tr{a:\Pg{\Delta} \type}\ \in\ \tr{T} && assumption \label{eq:2}\\
    &a:\ot{\tr\Delta} \to\type\ \in\ \tr{T} && \tr{\phantom{t}}\text{-def of DHOL Theories on }
                                             \ref{eq:2} \label{eq:3}\\
    &\per{a}:\ott{\tr{\Delta}}\to a\ \valpha\to a\ \valpha\to \bool \in\ \tr{T} &&
                                                                                 \tr{\phantom{t}}\text{-def of DHOL Theories on }\ref{eq:2} \label{eq:4}\\
    &\ass \tr{\Delta}\impl PER(\per{a}\ \valpha\ \per{\valpha}\ \vx) \in \tr{T} &&
                                                                                   \tr{\phantom{t}}\text{-def of DHOL Theories on }\ref{eq:2} \label{eq:5}\\
    \tr{\Gamma} &\vdash_{\tr{T}} \tr{\Delta} \leftarrow \tr{\delta} && assumption +
                                                                       IH \label{eq:6}\\
    \tr{\Gamma} &\vdash_{\tr{T}} \ot{\tr{\Delta}} \leftarrow \ot{\tr{\delta}} && \text{see Note in
                                                                       Def.}~\ref{def:holabbrs}
                                                                       \label{eq:7}\\
    \tr{\Gamma} &\vdash_{\tr{T}} \ott{\tr{\Delta}} \leftarrow \ott{\tr{\delta}} && \text{see Note in
                                                                       Def.}~\ref{def:holabbrs}
                                                                       \label{eq:8}\\
    \tr{\Gamma} &\vdash_{\tr{T}} a\ \ot{\tr{\delta}}:\type && \text{\Rtypesym},~\ref{eq:1},~\ref{eq:3},~\ref{eq:7} \label{eq:9}\\
    \tr{\Gamma} &\vdash_{\tr{T}} \tr{a\ \delta}:\type && \tr{\phantom{t}}\text{-def of DHOL
                                                         Types on }~\ref{eq:9} \label{eq:10}\\
    \tr{\Gamma} &\vdash_{\tr{T}} \per{a}\ \ott{\tr{\delta}} : (a\ \valpha\to a\ \valpha\to \bool)[\ott{\tr{\delta}}] &&
                                                                                                                        \text{\Rtermsym},~\ref{eq:1},~\ref{eq:4},~\ref{eq:8} \label{eq:11}\\
    \tr{\Gamma} &\vdash_{\tr{T}} \per{a}\ \ott{\tr{\delta}} : a\ \ot{\tr{\delta}}\to a\
                  \ot{\tr{\delta}}\to\bool && \text{apply substitution},~\ref{eq:11} \label{eq:12}\\
    \tr{\Gamma} &\vdash_{\tr{T}} \per{(a\ \delta)} : \tr{a\ \delta}\to \tr{a\ \delta}\to\bool
                        && \per{\phantom{t}}\text{-def and } \tr{\phantom{t}}\text{-def of DHOL
    types}~\ref{eq:12} \label{eq:13}\\
    \tr{\Gamma} &\vdash_{\tr{T}} PER(\per{a}\ \valpha\ \per{\valpha}\ \vx)[\tr{\delta}] &&
                                                                                         \text{\Rvalsym},~\ref{eq:1},~\ref{eq:5},~\ref{eq:6} \label{eq:14}\\
    \tr{\Gamma} &\vdash_{\tr{T}} PER(\per{a}\ \ott{\tr{\delta}}) && \text{apply
                                                                    substitution},~\ref{eq:14} \label{eq:15}\\
    \tr{\Gamma} &\vdash_{\tr{T}} PER(\per{(a\ \delta)}) && \per{\phantom{t}}\text{-def},~\ref{eq:15}
  \end{align}
  A remark about steps \ref{eq:7} and \ref{eq:8}: Inspection of the inference rules for
  substitutions shows, that the remark in Def.~\ref{def:holabbrs} extends to substitutions by
  just traversing the list and throwing out the corresponding elements of \(\delta\).

The same proof structure is directly applicable to the other rules, only note that \Rtypesym, \Rtermsym and \Rvalsym all proceed by
  substituting the \(\Delta\) in the premises by the \(\delta\) of the substitution.
\end{proof}

\section{Reflection}\label{app:sound}
The truth-reflection proof for PDHOL follows the one given for monomorphic DHOL by
Rothgang et al.~\cite{RRB23ext} closely, as the introduction of shallow polymorphism requires
only minor 
changes. This similarity stems from the fact that most changes to the proof rules happen to
accommodate the declarations in theory and context, while the validity rules stay the
same.  Our extension merely affects which types are possible and provides the option for polymorphic
conjectures. As reasoning can only happen on fully applied base types, there are only minor
changes to the formulation of some of the intermediate results.

In the sequel we will start with a overview over the proof idea, followed by the
necessary definitions and finally the extensions to the original proof necessitated by our
extending of the theory. 

The main challenge lies in the fact
that there are situations in which the erasure of ill-typed DHOL terms results in terms that
are well-typed in HOL. This is due to the non-injectivity of the translation: two fixed
length lists \(a:lst\: 2\) and \(b:lst\: 3\) of length 2 and 3 respectively are incomparable in
DHOL, but erasing them results in \(a:lst\) and \(b:lst\) for which equality would be well-typed.

Note that the opposite problem, namely that a well-typed DHOL term \(t\) translates to an
ill-typed HOL term \(\tr{t}\), cannot happen. This is clear due to the definition of the
translation.

As a result of this non-injectivity, a valid HOL-derivation cannot be translated into a valid
DHOL-derivation without further processing. The proof idea, then, is to show that it is
possible to transform a HOL proof of some translated, well-typed DHOL statement, into a
HOL proof that is in the well-typed image of the translation, allowing a direct translation
back into a DHOL proof. 

We proceed to show this in the following steps: First, we will show that the translation, while
not injective in general, is type-wise injective --- meaning that if \(t:A\) and \(s:A\) are
distinct DHOL terms of equal DHOL type, then so are \(\tr{t}:\tr{A}\) and \(\tr{s}:\tr{A}\).
This will allow us to associate a unique DHOL term with HOL terms that come from the 
erasure, assuming the type is known. Using this, we show that HOL proofs can be transformed
into HOL proofs that lie in the well-typed image of the translation which will finally allow us
to map said proofs to DHOL proofs of the untranslated conjecture. 

We front-load this section with the necessary definitions to highlight the relationship between
them.

\begin{definition}
Ill-typed DHOL terms \(t\) with a well-typed counterpart \(\tr{t}\)  will be called
\emph{spurious} while terms in which both --- erased and original --- terms are well-typed will be
called \emph{proper}. A \emph{improper} term \(\tr{t}\) is not in the
(translation-)image of any DHOL term \(t\).

\emph{Normalizing} an improper term results in a proper or spurious one. We introduce the
normalizing function used in the proof in Figure~\ref{fig:norm} and expand on it in the
sequel. Normalizing an already proper or spurious term returns the same term.

We extend the notion of ``proper'' from terms to contexts \(\Delta\), whenever \(\tr{\Gamma}\)
can be obtained from \(\Delta\) by adding typing assumptions. Then \(\Gamma\) is called the
\emph{quasi-preimage} of the \emph{proper} context \(\Delta\).

Furthermore, given a proper HOL context \(\Delta\), a statement \(\phi\) over this context is
called \emph{quasi-proper}, iff the normalization of \(\phi\) is \(\tr{F}\) for \(\Gamma
\vdash F : o\) and \(\Gamma\) quasi-preimage of \(\Delta\). In this case, F is called a
\emph{quasi-preimage} of \(\phi\).

As a last extension to this terminology, a validity judgment \(\Delta \vdash \phi\) is also
called \emph{proper} iff \(\Delta\) is proper and \(\phi\) is quasi-proper in this
context. Then \(\TrueT{\tr{\Gamma}}{\tr{F}}\) is called a \emph{relativization} of
\(\True{\Delta}{\phi}\) and \(\True{\Gamma}{F}\) is called a \emph{quasi-preimage} of
\(\True{\Delta}{\phi}\). 

We call an improper term \emph{almost proper} iff its normalization is not spurious. This is
equivalent to saying an improper term is almost proper iff it is quasi-proper (has a well-typed
quasi-preimage). Otherwise, it is called \emph{unnormalizably spurious}.

Finally, we give a definition of the property which allows us to translate HOL proofs into DHOL proofs. A valid HOL derivation is called \textit{admissible} iff all terms occurring in it are almost proper.
\end{definition}

\subsection{Type-wise injectivity of the translation}
Compared to the original formulation of Rothgang et al.~\cite{RRB23ext} there are some changes to
the translation. It is now possible to apply type arguments to base types and
constants. These, however, are preserved in the erasure: base types \(\tr{a\ \delta}\)
and constants \(\tr{c\ \delta}\) result in \(a\ \ot{\tr{\delta}}\) and \(c\
\ott{\tr{\delta}}\) respectively.

The other relevant change to the system is the addition of type variables \(\tp\alpha\) as an
option to the type system. These are straightforwardly translated into \(\tp\alpha\),
\(\alpha^*:\alpha\to\alpha\to\bool\), \(\ass \isper{\alpha^*}\). 
\begin{lemma}
  \label{lem:tyInj}
  Let \(s, t\) be DHOL terms of type \(A\). Assuming \(s\) and \(t\) are different, then
  \(\tr{s}:\tr{A}\) and \(\tr{t}:\tr{A}\) are different. 
\end{lemma}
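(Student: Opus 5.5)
We prove the contrapositive: if $\Gamma\vdash_T s:A$, $\Gamma\vdash_T t:A$ and $\tr s$ and $\tr t$ coincide syntactically (up to $\alpha$-renaming), then $s$ and $t$ coincide. The argument is a structural induction on $s$, simultaneous with an analogous statement for types and substitutions. For types, if $\TypeG A$, $\TypeG B$ and $\tr A=\tr B$, then $A$ and $B$ have the same shape, i.e.\ they differ at most in the term arguments of base types. For substitutions, if $\SubG{\delta}{\Delta}$ and $\SubG{\delta'}{\Delta}$ with $\ott{\tr\delta}=\ott{\tr{\delta'}}$, then $\delta=\delta'$ up to the placeholders $\checkmark$.

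We first reuse the monomorphic argument of \cite{RRB23ext}. The translation of terms is homomorphic on the term formers and never erases a term subexpression of a term. Variables go to variables, $\lambda$ to $\lambda$, application to application, $\impl$ to $\impl$, and $\eq Atu$ to $\perapp Atu$. Inspecting the head symbol of $\tr s=\tr t$ therefore fixes the head former of $s$ and $t$. Equality needs one extra check. Its image $\perapp Atu$ unfolds according to the shape of $A$, into $a^*\ldots$, $\alpha^*\ldots$, $\eq\bool\ldots$, or a $\forall$-formula. The starred symbols $a^*$ and $\alpha^*$ are fresh, so the shape of the image determines that the source term was an equality and also fixes the shape of the equality's type.

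For the subterms we need known types in order to invoke the induction hypothesis. We take these from the typing derivations, using that \Rtermteq only changes the type up to $\TEqG{}{}$. For an application $s_1\,s_2$ versus $t_1\,t_2$, the function parts have types $\P xAB$ and $\P x{A'}{B'}$ whose erasures coincide. The question is whether $A$ and $A'$ are PDHOL-equal, since the induction hypothesis is stated for a common type. The monomorphic proof handles this by generalising the lemma to terms of types with equal erasure. That weaker hypothesis is exactly what the induction on subterms produces, so we adopt the generalisation.

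The genuinely new cases are the polymorphic ones. The first is a constant reference $c\,\delta$ against $c'\,\delta'$. Erasure yields $c\,\ott{\tr\delta}$, which in the polymorphic setting includes the PER arguments $\per B$ for each type argument $B$. Equality of images forces $c=c'$ and equality of each $\tr B$, which by the type-level induction gives equal type arguments up to term indices. The term indices of the type arguments are recovered from the PER arguments. $\per{(a\,\gamma)}$ is $a^*\,\ott{\tr\gamma}$, which keeps the erased term indices $\tr u$, so the induction on those smaller terms applies. The second new case is a type variable $\alpha$, which is translated to itself and is trivially injective.

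The main obstacle is the type case with term arguments. A plain $\tr A=\tr B$ forgets indices, so the type-level statement must be the shape-only one given above. Injectivity of indices then has to come entirely from the PER components, using $\ott{}$ and not $\ot{}$. The bookkeeping of which component carries which information is the step we would verify most carefully. Where indices appear only in the type and not in any PER argument, as in the annotation of $\lam xAt$, we instead argue that they are determined up to PDHOL type equality by the common type $A$ of $s$ and $t$, which is all the lemma needs.
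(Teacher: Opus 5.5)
\textbf{There is a genuine gap, in the monomorphic part you inherit rather than in the polymorphic cases.}

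Your polymorphic cases match what the paper does. Constants with type arguments reduce to the recursive translation of those arguments, type variables translate to themselves, and equalities are over types with the same erasure. You are more precise than the paper in noting that the term indices of type arguments survive only in the PER components $\per B$ of $\ott{\tr\delta}$.

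The problem is that your generalisation to terms whose types merely have equal erasure is incompatible with your $\lambda$-case, which appeals to a \emph{common} type of $s$ and $t$. Under the generalised hypothesis, take $\lambda x{:}\,\vect\ \nat\ 2.\ 0$ and $\lambda x{:}\,\vect\ \nat\ 3.\ 0$ in the theory of Example~\ref{ex:vectors}. They are admissible inputs and have identical translations. Yet $\vect\ \nat\ 2$ and $\vect\ \nat\ 3$ are not equal types, since $2=_{\nat}3$ is not provable. So the induction hypothesis you need for the function part of an application is false in any form finer than ``same shape''.

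The top-level common type cannot rescue this, because it is not propagated into the function part of a $\beta$-redex. Take $s=(\lambda F{:}\,\vect\ \nat\ 2\to\nat.\ 0)\ (\lambda x{:}\,\vect\ \nat\ 2.\ 0)$, and let $t$ be the same term with $3$ in place of $2$. Both have type $\nat$, and $\tr s=\tr t=(\lambda F{:}\,\vect\ \nat\to\nat.\ 0)\ (\lambda x{:}\,\vect\ \nat.\ 0)$. Their annotations are not equal even up to PDHOL type equality.

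Consequences:
\begin{itemize}
\item Your fallback conclusion, that annotations are ``determined up to PDHOL type equality by the common type'', is false.
\item The lemma itself fails under any syntactic reading of ``different'', including yours modulo type equality.
\item It can only be rescued by a coarser reading, e.g.\ ``not provably equal''; both sides above $\beta$-reduce to $0$.
\item Alternatively, it can be restricted to $\beta$-normal terms. There every $\lambda$ occurs at the top, under another $\lambda$, or as an argument of a variable or constant head, so its type is forced.
\item The structural induction you describe yields neither version. The provable-equality version in particular would need an induction hypothesis relating subterms of non-equal types.
\end{itemize}

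The paper's own sketch passes over the same point, since it defers all non-polymorphic cases to the monomorphic proof. So what is missing is a correct formulation of the lemma, not only a case you overlooked.
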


\begin{proof}
  The proof proceeds by induction over the term structure. Different top-level productions
  result in different terms after the translation, so we can limit ourselves to the cases where
  both terms have the same root symbol. For non-equality terms, the only cases that need to be
  adjusted from the original proof, are where we performed changes to the translation.

  For that, notice that erasing applied base types and constants results in recursive calls
  to the translations of the applied types. By the induction hypothesis, these are already
  different, concluding the proof.

  Another interesting case occurs when the terms \(t, s\) are equalities over two types 
  erased to the same type. This is not possible for type variables \(\alpha\) as their
  translation just yields their simple counterparts.

  All remaining cases are identical to the original presentation.
\end{proof}

\subsection{Transforming HOL proofs into admissible HOL proofs}
In order to transform HOL proofs into admissible HOL proofs, the original proof defines two
functions.
First, they define a normalization function, given in Figure~\ref{fig:norm}, with changes
due to the incorporation of polymorphism highlighted. This normalization turns improper
HOL-statements into proper or spurious ones, i.e. the term \(norm[t]\) is
in the image of the translation after normalization.

\begin{figure}[t]
  \centering
  \begin{align*}
    norm[\tr{t}] &:= t \\
    norm[norm[s]] &:= norm[s] \\
    norm[A^*\ s] &:= \lambda y:\tr{A}.A^*\ s\ y \\
    norm[A^*] &:= \lambda x:\tr{A}.\lambda y:\tr{A}.A^*\ x\ y \\
    norm[\hlp{c\: \vec{A}}] &:= \hlp{c\: \vec{A}} \\
    norm[x] &:= x \\
    norm[f\ t] &:= norm[f]\ norm[t] \\
    norm[\lambda x:C.t] &:= \lambda x:C.norm[t] \\
    norm[s=_{\tr{A}} t] &:= A^*\ s\ t \\
    norm[s \Rightarrow t] &:= norm[s] \Rightarrow norm[t] \\
    norm[\forall x:\tr{A}.A^*\ x\ x\ \Rightarrow G] &:= \forall x,y:\tr{A}.A^*\ x\ y
                                                      \Rightarrow G \\
    \text{If } F \text{ not of shape } A^*\ \_\ \_ \Rightarrow \_ \text{ or } \forall x':\tr{A}.A^*\ x\ x'
    \Rightarrow \_:\\
    norm[\forall x:\tr{A}.F] &:= norm[\forall x:\tr{A}.A^*\ x\ x \Rightarrow F] \\
  \end{align*}
  \caption{\label{fig:norm} Definition of \(norm[t]\) with changes highlighted.}
\end{figure}

Second,  a \emph{normalizing statement transformation} \(sRed(t)\)
is applied to the derivation.
A normalizing statement transformation is a function that replaces terms and their
context in statements in such a way that unnormalizably spurious terms end up as almost proper
ones. In contrast to \(norm[t]\) the changes to accommodate polymorphism do not require
any changes in the definition of \(sRed(t)\), so the function (given in Figure~\ref{fig:sRed})
is identical to the one in \cite{RRB23ext}. \(sRed(t)\) proceeds by beta-eta normalizing terms and, in
case this does not make them almost proper, replaces unnormalizably spurious function
applications of type \(B\) by a ``default term'' \(\omega_B : B\) which is proper and exists
due to the non-emptiness assumption in HOL.

\begin{figure}[t]
  \centering
  \begin{align*}
    sRed(t_A) &:= t_A \\
    & \qquad \text{if } t \text{ has quasi-preimage of type A} \\
    sRed(f_{\Pi_{x:A}B}\ t_A) &:= sRed(sRed(f_{\Pi_{x:A}B})\ sRed(t_A)) \\
    & \qquad \text{if } f_{\Pi x:A.B}\ t_A \text{ not beta-eta reducible} \\
    \text{In the following, the term } t_A \text{ on the left-hand } & \text{ side is assumed to not be} &&\\
    \text{ almost proper with a quasi-preimage of type } &A:\\
    sRed(t_A) &:= sRed(t_A^{\beta\eta}) \\
    & \qquad \text{if } t \text{ eta-beta reducible} \\
    sRed(s_A =_{\tr{A}} t_{A'}) &:= sRed(s_A) =_{\tr{A}} sRed(t_A) \\
    sRed(F_o \Rightarrow G_o) &:= sRed(F_o) \Rightarrow sRed(G_o) \\
    sRed(\lambda x:A.s_B) &:= \lambda x:A.sRed(s_B) \\
    sRed((sRed(f_{\Pi_{x:A}B})_{\Pi_{x:A}B}\ sRed(t_{A'})_{A'})_{B'}) &:= \omega_{\tr{B}} \\
    & \qquad \text{ if } A \neq A' \text{ or } B \neq B'
  \end{align*}
  \caption{Definition of \(sRed(t)\).}
  \label{fig:sRed}
\end{figure}

This is aided by passing, for each term, a DHOL type \(A\) to the function, effectively associating them with a
quasi-preimage. This is mainly necessary for \(\lambda\)-functions where there are potentially
many quasi-preimages of differing types. To ensure correctness, it is, of course, required that an
indexed term \(t_A\) is of type \(\tr{A}\) and, if it is almost proper with a unique
quasi-preimage, that the quasi-preimage has type \(A\). 

Using the definition of the normalizing statement transformation, we can go on and state
\begin{lemma}
  \label{lem:sRed}
Assume a well-typed DHOL theory \(T\) and a conjecture \(\TrueG{\phi}\) with \(\Gamma\)
well-formed and \(\phi\) well-typed. Assume a valid HOL derivation of \(\TrueT{\tr{\Gamma}}
{\tr{\phi}}\). Then, we can index the terms in the derivations s.t. any steps S
in the derivation can be replaced by a macro-step (i.e. a step with the same assumptions and conclusion as the original step, composed of multiple micro-steps) for the normalizing statement transformation,
replacing step S s.t. after replacing all steps with their macro-steps:
\begin{itemize}
\item the resulting derivation is valid,
\item all terms occurring in the derivation are almost proper
\end{itemize}
\end{lemma}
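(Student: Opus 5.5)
The plan is to argue by induction on the given HOL derivation of \(\TrueT{\tr{\Gamma}}{\tr{\phi}}\), processed from the conclusion upward, choosing an indexing of terms by DHOL types as we go. The conclusion \(\tr{\phi}\) is proper, since it is the translation of the well-typed \(\phi\). Its index is therefore fixed by the quasi-preimage \(\phi\). By Lemma~\ref{lem:tyInj} this quasi-preimage is unique once the type is known.

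For each rule instance we index the premises as follows. Terms already occurring in the conclusion inherit their indices. Fresh terms, such as the cut formula of an implication elimination, the argument of a \(\forall\)-elimination, or the middle term of a transitivity/congruence step, receive the DHOL type of a quasi-preimage whenever one exists. Otherwise they receive the index forced by the typing of the surrounding application, with \(\omega_B\) available as a fallback.

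Next we define the macro-step for a step \(S\) with premises \(P_i\) and conclusion \(C\). It first derives \(sRed(P_i)\) from \(P_i\), then applies the corresponding rule to the transformed statements, and finally derives \(sRed(C)\). Two facts must be shown for each rule of HOL.
\begin{itemize}
\item[(i)] Each \(sRed\) transformation is itself derivable in HOL. \(\beta\eta\)-steps are valid by the \(\beta\)/\(\eta\) rules. The replacement of an unnormalizably spurious application by \(\omega_{\tr B}\) is justified because, at the position where it occurs, the value is never constrained by a PER-guarded hypothesis. This is where norm rewrites \(\forall x:\tr A.F\) into its guarded form \(\forall x,y:\tr A.\,A^*\,x\,y\Rightarrow\ldots\), and \(s=_{\tr A}t\) into \(A^*\,s\,t\).
\item[(ii)] The rule remains applicable to the transformed premises, i.e.\ \(sRed\) commutes with the rule up to derivable equalities.
\end{itemize}
For the structural rules (assumption lookup, implication introduction and elimination, \(\lambda\)-congruence) commutation is immediate from the homomorphic clauses of \(sRed\). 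For \(\forall\)-elimination and congruence of application, we use that \(sRed\) commutes with substitution of almost-proper terms. This holds by induction on the term, using the second part of Theorem~\ref{thm:preserve} for proper substituents.

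The polymorphic extension enters only in a few places. The rule lookups \Rtypesym, \Rtermsym and \Rvalsym in \(\tr T\) now carry type arguments \(\ot{\tr\delta}\), together with the PER arguments and the \(\isper{\cdot}\) checkmarks. Type variables contribute hypotheses \(\isper{\alpha^*}\). Since norm leaves \(c\ \vA\) untouched, and since instantiating a type variable with \(\tr A,\per A\) yields proper terms by Theorem~\ref{thm:preserve}, these cases reduce to the monomorphic argument of \cite{RRB23ext}. We check that PHOL's own type-instantiation rule maps almost proper terms to almost proper terms: such an instantiation corresponds to substituting a DHOL type, and the translation of the type substitute is proper.

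The main obstacle will be the equality and congruence rules, in particular transitivity and congruence of application. There an intermediate term may be spurious, e.g.\ relating erased vectors of different lengths. The task is to show that after \(sRed\) (replacement by \(\omega\)) the chain of equalities still composes validly, and that the replaced subterms are never needed for the final proper conclusion. I would first try to prove an auxiliary invariant: whenever a derivable statement is indexed with mismatched types \(A\neq A'\), it is equivalent in HOL to its \(sRed\) image, because the PER \(A^*\) trivialises on the mismatched part. Only if that fails would I fall back to case analysis on the shape of equations in \cite{RRB23ext}.
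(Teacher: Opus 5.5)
Your overall skeleton matches the paper: induction over the HOL derivation, indices assigned from the conclusion upward, and the monomorphic rule cases deferred to \cite{RRB23ext}. The mechanism you put inside each macro-step, however, does not work.

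You build a macro-step that goes from $P_i$ to $sRed(P_i)$ and on to $sRed(C)$, and you justify it by fact (i), namely that $sRed(P)$ is HOL-derivable from $P$. That is false in general, because replacing an unnormalizably spurious application by $\omega_{\tr B}$ is not an equivalence. Take a hypothesis $F$ opened by $\impl$-introduction in which a spurious application occurs unguarded. For instance, let $F$ be $f\ a =_{\bool} \top$, where $f$ is indexed with a function type on vectors of length $2$ and $a$ is a vector of length $3$. Then $sRed(F)$ is $\omega_{\bool} =_{\bool} \top$, which does not follow from $F$ (for $\omega_{\bool}=\top$ take $F$ to be $f\ a =_{\bool} \bot$ instead). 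Your justification, that the value is never constrained by a PER-guarded hypothesis, fails exactly here, and the guarding of quantifiers done by $norm$ does not help. Moreover, a macro-step that starts from the untransformed $P_i$ puts those statements, with their unnormalizably spurious terms, into the final derivation. That contradicts the second bullet of the lemma.

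What the lemma needs is that the derivation is transformed as a whole. Every statement, including local hypotheses, is replaced by its $sRed$-image. For each rule you then show that $sRed(C)$ is derivable from the $sRed(P_i)$ alone, possibly by a different rule. The paper's $\beta$-case shows the pattern. When $(\lambda x.s)\,t$ is not almost proper with the right index, it computes $sRed((\lambda x.s)\,t)=sRed(s)[x/sRed(t)]$ and closes the transformed conclusion by reflexivity, without ever passing through the original statement.

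For this simulation to be coherent you also need indexing conventions that your proposal omits:
\begin{itemize}
\item identical terms get identical indices, so that a cut formula or hypothesis is transformed the same way wherever it occurs;
\item both sides of an equation get the same index, so that $sRed$ of an equation is a well-indexed equation.
\end{itemize}
Note also that $\omega_B$ is a term, not an index.

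Finally, you leave the equality and congruence cases, which you yourself identify as the main obstacle, unresolved. The auxiliary invariant you propose for them is the same false claim that a statement is HOL-equivalent to its $sRed$-image. The PER $A^*$ does not trivialise on spurious arguments; it simply fails to hold for them, and that only makes a statement vacuous when it appears under a guard.
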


\begin{remark}[A note about indices]
  It is reasonable to assume that the addition of type variables changes the indexing
  procedure, so we give the full (original) procedure here:

  Indexing starts at the end of the derivation. We pick identical indices for identical terms.
  Whenever we need to index a constant or variable, and its preimage exists in the context of
  the DHOL conjecture, we pick the type of the preimage. Term equalities always have the same
  index on both sides. For non-atomic terms \(t\), we pick indices for the atomic subterms and
  choose for \(t\) a type of the unique (by Lemma~\ref{lem:tyInj}) quasi-preimage, such that
  the indices match up. If possible, we choose indices such that there exists a well-typed
  quasi-preimage of that type. Unless otherwise indexed, the index for \(sRed(t_A)\) will also
  be \(A\). For \(\lambda\)-functions that already have an index assigned, the variable and
  the body are assigned matching indices. If the \(\lambda\) function does not yet have an
  index, but is applied to an argument with an index, we assign that index to the variable
  of the \(\lambda\)-function.

  If none of these rules apply, we pick an arbitrary index that does not violate any of the
  future applications of the rules.

  Inspecting this algorithm, it becomes clear that it forbids at no point the choice of a type
  variable. Indeed, the labeling process is completely agnostic to the underlying set of
  types. 
\end{remark}

Due to the fact that no changes to the definition of \(sRed\) are necessary, and the conjecture
only talks about validity statements, we refer to the original proof in \cite{RRB23ext} for
details. We will nevertheless give one example derivation to illustrate how the function interacts with the labels:

\begin{proof}
 The proof proceeds by induction on the inference rules, and we pick the beta rule as
example: \[\rul[rbeta]{beta}{\Gamma \vdash_T (\lambda x:A.s)\ t:B}{\Gamma\vdash_T (\lambda x:A.s)\ t=_B s[x/t]}\]

For the sake of clarity, we will use the substitution notation used by Rothgang et
al. in~\cite{RRB23ext}. Here \(t[x/u]\) stands for the capture avoiding substitution of the
term \(x\) with the term \(u\) in \(t\).

By assumption we get \[\Delta \vdash_{\tr{T}} sRed((\lambda x_A:\tr{A}.s_B)\ t_A)_{B'}:\tr{B}.\]

We proceed by case distinction on whether it is almost proper with quasi-preimage of type \(B
\equiv B'\):

If it is, we get \[\Delta \vdash_{\tr{T}} ((\lambda x_A:\tr{A}.s_B)\ t_A)_{B'}:\tr{B}\] by
the first case in the definition of \(sRed\). We apply the beta rule and the definition
of \(sRed\)'s first case twice and the equality case once to that result to get the
goal \[\Delta \vdash_{\tr{T}} sRed ((\lambda x_A:\tr{A}.s_B)\ t_A =_{\tr{B}} s_B[x_A/t_A]).\]

If not, we observe that \[sRed((\lambda x_A:\tr{A}.s_B)\ t_A) = sRed(sRed(s_B)[x_A/sRed(t_A)]) =
sRed(s_B)[x_A/sRed(t_A)].\] From reflexivity we have \[\Delta \vdash_{\tr{T}}
sRed(s_B)[x_A/sRed(t_A)] =_{\tr{B}} sRed(s_B)[x_A/sRed(t_A)].\] Due to the induction hypothesis
and the choice of indices, we can assume that the \(sRed\) terms in the equality have a
quasi-preimage of type \(B\), and by several applications of the definition of \(sRed\) we
conclude \[\Delta \vdash_{\tr{T}} sRed ((\lambda x_A:\tr{A}.s_B)\ t_A =_{\tr{B}} s_B[x_A/t_A]).\]

Note how this transformed a single beta rule step into a macro-step. The derivation
shows that even if during a regular proof step the statement would become unnormalizably
spurious, we can transform the statements in a way that yields almost proper terms.
\end{proof}

\subsection{Translation of HOL proofs into DHOL proofs}
Finally, we show the reflection of truth theorem. As previously, we give the
general outline of the proof and refer to \cite{RRB23ext} for details.
\begin{proof}
  According to Lemma~\ref{lem:sRed} we can assume that the proof of \(\TrueT{\tr{\Gamma}}
  {\tr{F}}\) is admissible, as we can always transform a valid HOL proof into an
  admissible and valid HOL proof. Because admissibility implies the existence of a well-typed
  quasi-preimage and the fact that the translation is type-wise injective, we therefore
  have that the translated conjecture is a proper validity statement with unique quasi-preimage
  in DHOL.

  It remains to show that it is possible to lift the HOL derivation of the conjecture to a DHOL
  derivation of its quasi-preimage. For that, we can inspect the validity rules one for one and
  show that --- assuming the conclusion is proper and has a quasi-preimage --- all validity
  assumptions and their contexts are well-formed and proper respectively. From this, we
  continue to prove that in this case, the quasi-preimage of the conclusion of the rule is valid.

  As stated previously, the validity rules for the polymorphic extension do not change compared
  to their monomorphic variants. However, we now have to consider applied types. Inspecting the
  normalization function shows that normalizing constants applied to type arguments is the identity
  function. Therefore, there is no change in the validity of the proofs as performed in
  \cite{RRB23ext}.
\end{proof}

\section{Red-Black Tree in PDHOL}\label{app:rbt}
Following is the theory of red-black trees from our case study in PDHOL:
\begin{gather*}
  \clr:\type\quad\bl:\clr\quad\rd:\clr\\
  \nat:\type\quad 0:\nat\quad \suc:\nat\to\nat\\
  \tree:\hlp{\Pt{\alpha}}\hlm{\Pi_{c:\clr,n:\nat}} \type
  \quad\leaf:\hlp{\Pt{\alpha}} \tree\ \alpha\ \bl\ 0\\
  \rt:\hlp{\Pt{\alpha}}\hlm{\P{n}{\nat}}\tree\ \alpha\ \bl\ n \to \alpha \to \tree\ \alpha\
  \bl\ n \to \tree\ \alpha\ \rd\ n\\
  \bt:\hlp{\Pt{\alpha}}\hlm{\Pi_{c_1:\clr,c_2:\clr,n:\nat}} \tree\ \alpha\ c_1\ n \to
  \alpha \to \tree\ \alpha\  c_2\ n \to\\ \quad \tree\ \alpha\ \bl\ (\suc\ n)\\
  \rev:\hlp{\Pt{\alpha}}\hlm{\Pi_{c:\clr,n:\nat}} \tree\ \alpha\ c\ n \to \tree\ \alpha\ c\
  n\\
  \ass\forall\alpha:\type.\rev\ \alpha\ \bl\ 0\ (\leaf\ \alpha) = \leaf\ \alpha\\
  \ass\forall\alpha:\type,x:\alpha,n:\nat,c_1:\clr,c_2:\clr,t_1:\tree\ \alpha\ c_1\
  n,t_2:\tree\ \alpha\ c_2\ n.\\
  \quad \rev\ \alpha\ \bl\ (\suc\ n)\ (\bt\ \alpha\ n\ c_1\ c_2\ t_1\ x\ t_2) = \\
  \quad \bt\ \alpha\ n\ c_2\ c_1\ (\rev\ \alpha\ c_2\ n\ t_2)\ x\ (\rev\ \alpha\ c_1\ n\ t_1)\\
  \ass\forall\alpha:\type,x:\alpha,n:\nat,t_1:\tree\ \alpha\ \bl\ n,t_2:\tree\ \alpha\ \bl\ n.\\
  \quad \rev\ \alpha\ \rd\ n\ (\rt\ \alpha\ n\ t_1\ x\ t_2) =\\
  \quad \rt\ \alpha\ n\ (\rev\ \alpha\ \bl\ n\ t_2)\ x\ (\rev\ \alpha\ \bl\ n\ t_1)\\
\end{gather*}
From this various in-between steps, e.g. base cases and instanced induction schemes, were
shown. In the interest of readability these results, as well as the induction axiom, are
omitted here. Taking everything together, this allows us to proof the following involution
lemma:  
\begin{gather*}
  \ass\forall\alpha:\type,n:\nat,c:\clr,t:\tree\ \alpha\ c\ n.\\
  \quad \rev\ \alpha\ c\ n\ (\rev\ \alpha\ c\ n\ t) = t
\end{gather*}

\section{Experimental Results}
\begin{table}
\begin{tabular}{l|r|r|r||l|r|r|r}
 & \multicolumn{3}{c}{Time to prove} & & \multicolumn{3}{c}{Time to prove}\\
  File name & Vampire & Leo & Zipper & File name & Vampire & Leo & Zipper \\
  \hline
  \texttt{rbt-rev-invol} & 13.48 & TO & TO & \texttt{list-app-assoc} & 13.37 & TO & TO \\
  \quad \rotatebox[origin=c]{180}{$\Lsh$}\texttt{TCO} & trivial & trivial & trivial & \quad
                                                                                      \rotatebox[origin=c]{180}{$\Lsh$}\texttt{TCO}
                    & 0.02 & 0.70 & 0.35 \\
  \texttt{-base} & 33.01 & TO & TO & \texttt{-TI} & TO & TO & TO \\
  \quad \rotatebox[origin=c]{180}{$\Lsh$}\texttt{TCO} & trivial & trivial & trivial & \quad
                                                                                      \rotatebox[origin=c]{180}{$\Lsh$}\texttt{TCO}
                    & 0.02 & 0.66 & 0.38\\
  \texttt{-base-BTLeaf} & 0.08 & TO & TO & \texttt{-TI\_alt} & 14.66 & TO & TO \\
  \quad \rotatebox[origin=c]{180}{$\Lsh$}\texttt{TCO} & trivial & trivial & trivial & \quad
                                                                                      \rotatebox[origin=c]{180}{$\Lsh$}\texttt{TCO}
                    & 0.02 & 0.81 & 0.36 \\
  \texttt{-base-RTLeaf} & 8.03 & TO & TO & \texttt{-base} & 5.18 & TO & TO \\
  \quad \rotatebox[origin=c]{180}{$\Lsh$}\texttt{TCO} & trivial & trivial & trivial & \quad
                                                                                      \rotatebox[origin=c]{180}{$\Lsh$}\texttt{TCO}
                    & 0.02 & 0.65 & <0.01 \\
  \texttt{-base-lemma} & 18.01 & TO & TO & \texttt{-base-TI} & 10.83 & TO & TO \\
  \quad \rotatebox[origin=c]{180}{$\Lsh$}\texttt{1.TCO} & 0.02 & 0.64 & <0.01 & \quad
                                                                                 \rotatebox[origin=c]{180}{$\Lsh$}\texttt{TCO}
                    & 0.02 & 0.47 & <0.01\\
  \quad \rotatebox[origin=c]{180}{$\Lsh$}\texttt{2.TCO} & 0.02 & 0.67 & <0.01 &
                                                                                 \texttt{-indinst}
                    & TO & TO & TO \\
  \texttt{-subBTStep1} & 0.46 & 2.99 & TO & \quad
                                          \rotatebox[origin=c]{180}{$\Lsh$}\texttt{1.TCO} &
                                                                                             0.02
                          & 0.54 & <0.01 \\
  \quad \rotatebox[origin=c]{180}{$\Lsh$}\texttt{TCO} & trivial & trivial & trivial & \quad
                                                                                      \rotatebox[origin=c]{180}{$\Lsh$}\texttt{2.TCO}
                    & 0.02 & 0.72 & 1.81 \\
  \texttt{-subBTStep2} & 0.22 & 47.48 & TO & \quad
                                          \rotatebox[origin=c]{180}{$\Lsh$}\texttt{3.TCO} &
                                                                                             0.02
                          & 0.90 & 1.29 \\
  \quad \rotatebox[origin=c]{180}{$\Lsh$}\texttt{TCO} & trivial & trivial & trivial & \quad
                                                                                      \rotatebox[origin=c]{180}{$\Lsh$}\texttt{4.TCO}
                    & 0.02 & 0.72 & 1.71 \\
  \texttt{-subBTStep3} & 0.08 & 2.80 & 2.32 & \texttt{-indinst-TI} & TO & TO & TO \\
  \quad \rotatebox[origin=c]{180}{$\Lsh$}\texttt{TCO} & trivial & trivial & trivial & \quad
                                                                                      \rotatebox[origin=c]{180}{$\Lsh$}\texttt{1.TCO}
                    & 0.02 & 0.56 & <0.01 \\
  \texttt{-composedBTStep} & 10.62 & TO & TO & \quad
                                               \rotatebox[origin=c]{180}{$\Lsh$}\texttt{2.TCO}
                    & 0.02 & 0.80 & 1.60\\
  \quad \rotatebox[origin=c]{180}{$\Lsh$}\texttt{TCO} & trivial & trivial & trivial & \quad
                                                                                      \rotatebox[origin=c]{180}{$\Lsh$}\texttt{3.TCO}
                    & 0.02 & 0.78 & 1.34 \\
  \texttt{-BTStep} & 35.49 & TO & TO & \quad \rotatebox[origin=c]{180}{$\Lsh$}\texttt{4.TCO} &
                                                                                                0.02
                          & 0.76 & 1.70 \\
  \quad \rotatebox[origin=c]{180}{$\Lsh$}\texttt{TCO} & trivial & trivial & trivial &
                                                                                      \texttt{-indstep1}
                    & 10.56 & TO & TO \\
  \texttt{-RTStep} & 5.67 & TO & TO & \quad \rotatebox[origin=c]{180}{$\Lsh$}\texttt{1.TCO} &
                                                                                               0.02
                          & 0.56 & 0.37 \\
  \quad \rotatebox[origin=c]{180}{$\Lsh$}\texttt{TCO} & trivial & trivial & trivial & \quad
                                                                                      \rotatebox[origin=c]{180}{$\Lsh$}\texttt{2.TCO}
                    & 0.02 & 1.93 & 0.69\\
  \texttt{-indinst} & TO & TO & TO & \texttt{-indstep1-TI} & TO & TO & TO \\
  \quad \rotatebox[origin=c]{180}{$\Lsh$}\texttt{TCO} & trivial & trivial & trivial & \quad
                                                                                      \rotatebox[origin=c]{180}{$\Lsh$}\texttt{1.TCO}
                    & 0.02 & 0.63 & 0.50 \\
  \texttt{bad\_tree} & TO & TO & TO & \quad \rotatebox[origin=c]{180}{$\Lsh$}\texttt{2.TCO} &
                                                                                               0.02
                          & 1.43 & 0.49 \\
  \quad \rotatebox[origin=c]{180}{$\Lsh$}\texttt{1.TCO} & TO & TO & TO & \texttt{-indstep2} &
                                                                                               10.58
                          & TO & TO \\
  \quad \rotatebox[origin=c]{180}{$\Lsh$}\texttt{2.TCO} & TO & TO & TO & \quad
                                                                          \rotatebox[origin=c]{180}{$\Lsh$}\texttt{1.TCO}
                    & 0.02 & 0.64 & 0.40 \\
  \texttt{list-app-nil} & 1.39 & TO & TO & \quad \rotatebox[origin=c]{180}{$\Lsh$}\texttt{2.TCO}
                    & 0.02 & 0.67 & 0.68\\
  \quad \rotatebox[origin=c]{180}{$\Lsh$}\texttt{TCO} & 0.02 & 0.65 & <0.01 &
                                                                              \texttt{indstep2-TI}
                    & TO & TO & TO \\
  \texttt{-base} & 0.02 & TO & 2.18 & \quad \rotatebox[origin=c]{180}{$\Lsh$}\texttt{1.TCO} &
                                                                                              0.02
                          & 0.64 & 0.39 \\
  \quad \rotatebox[origin=c]{180}{$\Lsh$}\texttt{TCO} & 0.02 & 0.63 & <0.01 & \quad
                                                                              \rotatebox[origin=c]{180}{$\Lsh$}\texttt{2.TCO}
                    & 0.02 & 1.20 & 0.39 \\
  \texttt{-indinst} & TO & TO & TO & \texttt{-indstep3} & 0.4 & TO & TO \\
  \quad \rotatebox[origin=c]{180}{$\Lsh$}\texttt{1.TCO} & 0.02 & 0.60 & <0.01 & \quad
                                                                                \rotatebox[origin=c]{180}{$\Lsh$}\texttt{1.TCO}
                    & 0.02 & 0.65 & 0.33 \\
  \quad \rotatebox[origin=c]{180}{$\Lsh$}\texttt{2.TCO} & 0.02 & 0.64 & <0.01 & \quad
                                                                                \rotatebox[origin=c]{180}{$\Lsh$}\texttt{2.TCO}
                    & 0.02 & 8.60 & 0.09 \\
  \quad \rotatebox[origin=c]{180}{$\Lsh$}\texttt{3.TCO} & 0.02 & 0.67 & <0.01 & \quad
                                                                                \rotatebox[origin=c]{180}{$\Lsh$}\texttt{3.TCO}
                    & 0.02 & 8.65 & 0.61 \\
  \quad \rotatebox[origin=c]{180}{$\Lsh$}\texttt{4.TCO} & 0.02 & 0.72 & <0.01 &
                                                                                \texttt{indstep3-TI}
                    & 6.72 & TO & TO \\
  \texttt{-indstep} & 5.27 & TO & TO & \quad \rotatebox[origin=c]{180}{$\Lsh$}\texttt{1.TCO} &
                                                                                               0.02
                          & 0.62 & 0.35 \\
  \quad \rotatebox[origin=c]{180}{$\Lsh$}\texttt{1.TCO} & 0.02 & 0.55 & <0.01 & \quad
                                                                                \rotatebox[origin=c]{180}{$\Lsh$}\texttt{2.TCO}
                    & 0.02 & 1.92 & 0.02 \\
  \quad \rotatebox[origin=c]{180}{$\Lsh$}\texttt{2.TCO} & 0.02 & 0.52 & <0.01 &  \quad
                                                                                \rotatebox[origin=c]{180}{$\Lsh$}\texttt{3.TCO}
                    & 0.02 & 1.41 & 0.40 \\
  \texttt{list-rev-invol} & 11.15 & TO & TO & \texttt{-indstep4} & 1.43 & 2.69 & TO \\
  \quad \rotatebox[origin=c]{180}{$\Lsh$}\texttt{TCO} & trivial & trivial & trivial & \quad
                                                                                      \rotatebox[origin=c]{180}{$\Lsh$}\texttt{TCO}
                    & 0.01 & 0.58 & 0.01 \\
  \texttt{-TI} & 8.55 & TO & TO & \texttt{-indstep4-TI} & 6.72 & TO & TO \\
  \quad \rotatebox[origin=c]{180}{$\Lsh$}\texttt{TCO} & trivial & trivial & trivial & \quad
                                                                                      \rotatebox[origin=c]{180}{$\Lsh$}\texttt{TCO}
                    & 0.02 & 0.53 & 0.02 \\
  \texttt{-step1} & 0.09 & TO & TO & \texttt{-indstep5} & TO & TO & TO \\
  \quad \rotatebox[origin=c]{180}{$\Lsh$}\texttt{TCO} & 0.02 & 0.82 & 0.25 & \quad
                                                                             \rotatebox[origin=c]{180}{$\Lsh$}\texttt{1.TCO}
                    & 0.02 & 0.70 & 0.35 \\
  \texttt{-step1-TI} & 6.60 & TO & TO & \quad \rotatebox[origin=c]{180}{$\Lsh$}\texttt{2.TCO} &
                          0.03 & 1.22 & 0.74 \\
  \quad \rotatebox[origin=c]{180}{$\Lsh$}\texttt{TCO} & 0.02 & 0.75 & 0.19 & & & & \\
  \texttt{-step2} & 5.40 & TO & TO & & & & \\
  \quad \rotatebox[origin=c]{180}{$\Lsh$}\texttt{1.TCO} & 0.02 & 0.56 & 0.35 & & & & \\
  \quad \rotatebox[origin=c]{180}{$\Lsh$}\texttt{2.TCO} & 0.02 & 2.26 & <0.01 & & & & \\
\end{tabular}
\end{table}
\begin{table}[t]
\begin{tabular}{l|r|r|r||l|r|r|r}
  \texttt{-step2-TI} & 6.63 & TO & TO & \quad \texttt{-indstep5-TI} & TO & TO & TO \\
  \quad \rotatebox[origin=c]{180}{$\Lsh$}\texttt{1.TCO} & 0.02 & 0.80 & 0.29 & \quad \rotatebox[origin=c]{180}{$\Lsh$}\texttt{1.TCO} & 0.02 & 0.71 & 0.35 \\
  \quad \rotatebox[origin=c]{180}{$\Lsh$}\texttt{2.TCO} & 0.01 & 0.61 & <0.01 & \quad
                                                                                \rotatebox[origin=c]{180}{$\Lsh$}\texttt{2.TCO}
                                                                    & 0.02 & 1.45 & 0.41 \\
  \texttt{-step3} & 5.31 & TO & TO & \texttt{indstep5-TI\_alt} & 0.18 & TO & TO \\
  \quad \rotatebox[origin=c]{180}{$\Lsh$}\texttt{1.TCO} & 0.02 & 0.49 & <0.01 & \quad \rotatebox[origin=c]{180}{$\Lsh$}\texttt{1.TCO} & 0.02
                          & 0.63 & 0.48 \\
  \quad \rotatebox[origin=c]{180}{$\Lsh$}\texttt{2.TCO} & 0.02 & 0.37 & <0.01 & \quad
                                                                               \rotatebox[origin=c]{180}{$\Lsh$}\texttt{2.TCO}
                    & 0.02 & 0.86 & 0.42 \\
  \texttt{-step3-TI} & 6.60 & TO & TO  &
                                                                                \texttt{inst-poly-matrix-add-com}
                    & TO & TO & TO \\ 
  \quad \rotatebox[origin=c]{180}{$\Lsh$}\texttt{1.TCO} & 0.02 & 0.56 & <0.01 & \quad \rotatebox[origin=c]{180}{$\Lsh$}\texttt{TCO} &
                                                                                           trivial
                          & trivial & trivial \\
  \quad \rotatebox[origin=c]{180}{$\Lsh$}\texttt{2.TCO} & 0.02 & 0.60 & <0.01 & \texttt{inst-poly-matrix-mul-com}
                    & TO & TO & TO \\
  \texttt{-step4} & 0.18 & TO & TO & \quad
                                                                                \rotatebox[origin=c]{180}{$\Lsh$}\texttt{TCO}
                    & trivial & trivial & trivial \\
  \quad \rotatebox[origin=c]{180}{$\Lsh$}\texttt{TCO} & 0.02 & 0.56 & <0.01 &
                                                                              \texttt{finite-type}
                                                                    & TO & CA & TO \\
  \texttt{-step4-TI} & 6.60 & TO & TO & & & \\
  \quad \rotatebox[origin=c]{180}{$\Lsh$}\texttt{TCO} & 0.02 & 0.56 & <0.01 & & & \\
  \texttt{-step5} & 0.02 & TO & 1.12 & & & \\
  \quad \rotatebox[origin=c]{180}{$\Lsh$}\texttt{TCO} & trivial & trivial & trivial & & & \\
  \texttt{-step5-TI} & 6.62 & TO & TO & & & \\
  \quad \rotatebox[origin=c]{180}{$\Lsh$}\texttt{TCO} & trivial & trivial & trivial & & & \\
  \texttt{-step5-TI\_alt} & 0.03 & TO & 1.27 & & & \\
  \quad \rotatebox[origin=c]{180}{$\Lsh$}\texttt{TCO} & trivial & trivial & trivial & & & \\
\end{tabular}
\end{table}
 TCOs with the entry
``trivial'' were equalities that could be dismissed by reflexivity. ``TO'' represents a
timeout, CA the SZS Status ContradictoryAxioms. \texttt{-TI} stands for \textit{T}ype
\textit{I}nstantiated and represent our 
experiments where type arguments are applied to polymorphic axioms and conjectures.

\label{app:res}
\end{document}